\newtheorem{theorem}{Theorem}
\newenvironment{proof}[1][Proof]{\begin{trivlist}
\item[\hskip \labelsep {\bfseries #1}]}{\end{trivlist}}
\newcommand{\qed}{\nobreak \ifvmode \relax \else
      \ifdim\lastskip<1.5em \hskip-\lastskip
      \hskip1.5em plus0em minus0.5em \fi \nobreak
      \vrule height0.75em width0.5em depth0.25em\fi}
\newcommand{\RNum}[1]{\uppercase\expandafter{\romannumeral #1\relax}}
\begin{document}
%
\title{Generalized Independent Component Analysis Over Finite Alphabets}
%
%
%

\author{Amichai~Painsky,~\IEEEmembership{Member,~IEEE,}
        Saharon~Rosset and~Meir~Feder,~\IEEEmembership{Fellow,~IEEE}

\thanks{A. Painsky and S. Rosset are with the Statistics Department, Tel Aviv University, Tel Aviv, Israel. contact: amichaip@eng.tau.ac.il}
\thanks{M. Feder is with the Department of Electrical Engineering, Tel Aviv University, Tel Aviv, Israel}
\thanks{The material in this paper was presented in part at the IEEE International Symposium on Information Theory (ISIT) 2014.}}

%
%

\markboth{IEEE TRANSACTIONS ON INFORMATION THEORY}%
{Shell \MakeLowercase{\textit{et al.}}: Bare Demo of IEEEtran.cls for Journals}
%



\maketitle

\begin{abstract}
Independent component analysis (ICA) is a statistical method for transforming an observable multidimensional random
vector into components that are as statistically independent as possible from each other.
Usually the ICA framework assumes a model according to which the observations are generated (such as a linear transformation with additive noise). ICA over finite fields is a special case of ICA in which both the observations and the independent components are over a finite alphabet. In this work we consider a generalization of this framework in which an observation vector is decomposed to its independent components (as much as possible) with no prior assumption on the way it was generated. This generalization is also known as Barlow's \textit{minimal redundancy representation} problem and is considered an open problem. We propose several theorems and show that this NP hard problem can be accurately solved with a branch and bound search tree algorithm, or tightly approximated with a series of linear problems. Our contribution provides the first efficient and constructive set of solutions to Barlow's problem.
The minimal redundancy representation (also known as \textit{factorial code}) has many applications, mainly in the fields of Neural Networks and Deep Learning. The Binary ICA (BICA) is also shown to have applications in several domains including medical diagnosis, multi-cluster assignment, network tomography and internet resource management. In this work we show this formulation further applies to multiple disciplines in source coding such as predictive coding, distributed source coding and coding of large alphabet sources.  

\end{abstract}

\begin{IEEEkeywords}
Independent Component Analysis, BICA, ICA over Galois Field, Blind Source Separation, Minimal Redundancy Representation, Minimum Entropy Codes,  factorial Codes, Predictive Coding, Distributed Source Coding, Neural Networks.
\end{IEEEkeywords}

%
\IEEEpeerreviewmaketitle

\section{Introduction}
%
%
%
%
\IEEEPARstart{I}{ndependent}  Component Analysis (ICA) addresses the recovery of unobserved statistically independent source signals from their observed mixtures, without full prior knowledge of the mixing function or the statistics of the source signals. The classical Independent Components Analysis framework usually assumes linear combinations of the independent sources over the field of real valued numbers $\mathbb{R}$ \cite{hyvarinen2004independent}. A special variant of the ICA problem is when the sources, the mixing model and the observed signals are over a finite field.

Several types of generative mixing models can be assumed when working over GF(P), such as modulu additive operations, OR operations (over the binary field) and others. Existing solutions to ICA mainly differ in their assumptions of the generative mixing model, the prior distribution of the mixing matrix (if such exists) and the noise model. The common assumption to these solutions is that there exist statistically independent source signals which are mixed according to some known generative model. 

In this work we drop this assumption and consider a generalized approach which is applied to a random vector and decomposes it into independent components (as much as possible) with no prior assumption on the way it was generated. This problem was first introduced by Barlow \cite{barlow1989finding} and is considered a long--standing open problem.


The rest of this manuscript is organized as follows: In Section \RNum{2} we review the work that was previously done in finite fields ICA and factorial codes. In Section \RNum{3} we present the generalized binary ICA problem, propose several theorems and 
introduce three different algorithms for it. In Section \RNum{4} we extend our discussion to any finite alphabet, focusing on the increased computational load it confers and suggest methods of dealing with it. We then dedicate Section \RNum{5} to a constrained version of our problem, aimed to further reduce the computational complexity of our suggested solutions. In Section \RNum{6} we present several applications for our suggested framework, mainly focusing on the field of source coding.

\section{Previous Work}

In his work from 1989, Barlow \cite{barlow1989finding} presented a \textit{minimally redundant  representation} scheme for binary data. He claimed that a good representation should capture and remove the redundancy of the data. This leads to a \textit{factorial representation/ encoding} in which the components are as mutually independent of each other as possible. Barlow suggested that such representation may be achieved through \textit{minimum entropy encoding}: an invertible transformation (i.e., with no information loss) which minimizes the sum of marginal entropies (as later presented in (\ref{eq:sum_ent_min})).

Factorial representations have several advantages. The probability of the occurrence of any realization can be simply computed as the product of the probabilities of the individual components that represent it (assuming such decomposition exists). In addition, any method of finding factorial codes automatically implements \textit{Occam's razor} which prefers simpler models over more complex ones, where simplicity is defined as the number of parameters necessary to represent the joint distribution of the data.
In the context of supervised learning, independent features can also make later learning easier; if the input units to a supervised learning networks are uncorrelated, than the Hessian of its error function is diagonal, allowing accelerated learning abilities \cite{becker1988improving}.
There exists a large body of work which demonstrates the use of factorial codes in learning problems. This mainly includes Neural Networks \cite{becker1996unsupervised, obradovic1996information} with application to facial recognition \cite{choi2000factorial, bartlett2002face, martiriggiano2005face, bartlett2007information}  and more recently, Deep Learning \cite{schmidhuber2011fast, schmidhuber2015deep}.  

Unfortunately Barlow did not propose any direct method for finding factorial codes. Atick and Redlich \cite{atick1990towards} proposed a cost function for Barlow's principle for linear systems, which minimize the redundancy of the data subject to a minimal information loss constraint. This is closely related to Plumbey's \cite{plumbley1993efficient}  objective function, which minimizes the information loss subject to a fixed redundancy constraint. Schmidhuber \cite{schmidhuber1992learning} then proposed several ways of approximating Barlow's minimum redundancy principle in the non--linear case. This naturally implies much stronger results of statistical independence. However, Schmidhuber's scheme is rather complex, and appears to be subject to local minima \cite{becker1996unsupervised}. To our best knowledge, the problem of finding minimum entropy codes is still considered an open problem. In this work we present what appears to be the first efficient and constructive set of methods for minimizing Barlow's redundancy criterion. 

In a second line of work, we may consider our contribution as a generalization of the BICA problem. In his pioneering BICA work, Yeredor \cite{yeredor2007ica} assumed linear XOR mixtures and investigated the identifiability problem. A deflation algorithm is proposed for source separation based on entropy minimization. Yeredor assumes the number of independent sources is known and the mixing matrix is a $K$-by-$K$ invertible matrix. Under these constraints, he proves that the XOR model is invertible and there exists a unique transformation matrix to recover the independent components up to permutation ambiguity. 

Yeredor then extended his work \cite{yeredor2011independent} to cover the ICA problem over Galois fields of any prime number. His ideas were further analyzed and improved by Gutch et al. \cite{gutch2012ica}. 


In \cite{vsingliar2006noisy}, a noise-OR model is introduced to model dependency among observable random variables using $K$ (known) latent factors. A variational inference algorithm is developed. In the noise-OR model, the probabilistic dependency between observable vectors and latent vectors is modeled via the noise-OR conditional distribution. 

In \cite{wood2012non}, the observations are generated from a noise-OR generative model. The prior of the mixing matrix is modeled as the Indian buffet process \cite{griffiths2005infinite}. Reversible jump Markov chain Monte Carlo and Gibbs sampler techniques are applied to determine the mixing matrix

Streith et al. \cite{streich2009multi} study the BICA problem  where the observations are either drawn from a signal following OR mixtures or from a noise component. The key assumption made in that work is that the observations are conditionally independent given the model parameters (as opposed to the latent variables). This greatly reduces the computational complexity and makes the scheme amenable to a objective descent-based optimization solution. However, this assumption is in general invalid.

In \cite{nguyen2011binary}, Nguyen and Zheng consider OR mixtures and propose a deterministic iterative algorithm
to determine the distribution of the latent random variables and the mixing matrix. 
    
There also exists a large body of work on blind deconvolution with binary sources in the context of wireless communication \cite{diamantaras2006blind,yuanqing2003blind} and some literature on Boolean/binary factor analysis (BFA) which is also related to this topic \cite{belohlavek2010discovery}.

\section{Generalized Binary Independent Component Analysis}

Throughout this paper we use the following standard notation: underlines denote vector quantities, where their respective components are written without underlines but with index. For example, the components of the $n$-dimensional vector $\underline{X}$ are $X_1, X_2, \dots X_n$.
Random variables are denoted with capital letters while their realizations are denoted with the respective lower-case letters. $P_{\underline{X}}\left(\b{x}\right)  \triangleq P(X_1 = x_1, X_2= x_2\dots) $ is the probability function of  $\underline{X}$ while $H\left(\underline{X}\right)$ is the entropy of $\underline{X}$. This means  $H\left(\underline{X}\right)=-\sum_{\b{x}} P_{\underline{X}}\left(\b{x}\right) \log{P_{\underline{X}}\left(\b{x}\right)}$ where the $\log{}$ function denotes a logarithm of base $2$ and $\lim_{x \to 0} x\log{(x)} = 0$. Further, we denote the binary entropy as $h_b (p)=-p\log{p}-(1-p)\log{(1-p)}$.

\subsection{Problem Formulation}
Suppose we are given a random vector $\underline{X}\sim P_{\b{x}}\left(\b{x}\right)$ of dimension $n$ and alphabet size $A$ for each of its components. We are interested in an invertible, not necessarily linear, transformation $\underline{Y}=g(\underline{X})$ such that $\underline{Y}$  is of the same dimension and alphabet size, $g:\{1,\dots,A\}^n \rightarrow \{1,\dots,A\}^n$. In addition we would like the components of $\underline{Y}$ to be as "statistically independent as possible".

The common ICA setup is not limited to invertible transformations (hence $\underline{Y}$ and  $\underline{X}$ may be of different dimensions). However, in our work we focus on this setup as we would like $\underline{Y}=g(\underline{X})$ to be ``lossless" in the sense that we do not loose any information. Further motivation to this setup is discussed in \cite{barlow1989finding, schmidhuber1992learning} and throughout the Applications section below.    

Notice that an invertible transformation of a vector $\underline{X}$, where $\underline{X}$ is over a finite alphabet of size $A$, is actually a one-to-one mapping (i.e., permutation) of its $A^n$ words. For example, if $\underline{X}$ is over a binary alphabet and is of size $n$, then there are $2^n!$ possible permutations of its words. 

To quantify the statistical independence among the components of the vector $\underline{Y}$ we use the well-known total correlation measure, which was first introduced by Watanabe \cite{watanabe1960information} as a multivariate generalization of the mutual information.

\begin{equation}\label{eq:min_criterion}
C(\underline{Y})={\displaystyle \sum_{i=1}^{n}{H(Y_i)}-H(\underline{Y})}.
\end{equation}
This measure can also be viewed as the cost of coding the vector $\underline{Y}$ component-wise, as if its components were statistically independent, compared to its true entropy. Notice that the total correlation is non-negative and equals zero iff the components of $\underline{Y}$ are mutually independent. Therefore, ``as statistically independent as possible" may be quantified by minimizing $C(\underline{Y})$. The total correlation measure was considered as an objective for minimal redundency representation by Barlow \cite{barlow1989finding}. It is also not new to finite field ICA problems, as demonstrated in \cite{attux2011immune}. Moreover, we show that it is specifically adequate to our applications, as described in the last section.

Since we define $\underline{Y}$ to be an invertible transformation of $\underline{X}$ we have $H(\underline{Y})=H(\underline{X})$ and our minimization objective is

\begin{equation}
{\displaystyle \sum_{i=1}^{n}{H(Y_i)} \rightarrow min.}
\label{eq:sum_ent_min}
\end{equation}

In this section we focus on the binary case. The probability function of the vector  $\underline{X}$  is therefore defined by $P(X_1,\ldots,X_n)$ over $2^n=N$ possible words and our objective function is simply

\begin{equation}
{\displaystyle \sum_{i=1}^{n}{h_b(P(Y_i=0))} \rightarrow min.}
\label{eq:sum_ent_min_binary}
\end{equation}

We notice that $P(Y_i=0)$ is the sum of probabilities of all words whose $i^{th}$ bit equals $0$. We further notice that the optimal transformation is not unique. For example, we can always invert the  $i^{th}$ bit of all words, or shuffle the bits, to achieve the same minimum.

Any approach which exploits the full statistical description of the joint probability distribution
of $\underline{X}$  would require going over all $2^n$ possible words at least once. Therefore, a computational load of at least $O(2^n)$ seems inevitable. Still, this is significantly smaller (and often realistically far more affordable) than $O(2^n!)$, required by brute-force search over all possible permutations. 

Indeed, the complexity of the currently known binary ICA (and Factorial codes) algorithms falls within this range. The AMERICA \cite{yeredor2011independent} algorithm, which assumes a XOR mixture, has a complexity of $O(n^2\cdot 2^n)$. The MEXICO algorithm, which is an enhanced version of AMERICA, achieves a complexity of $O(2^n)$ under some restrictive assumptions on the mixing matrix. In \cite{nguyen2011binary} the assumption is that the data was generated over OR mixtures and the asymptotic complexity is $O(n \cdot 2^n)$. There also exist other heuristic methods which avoid an exhaustive search, such as \cite{attux2011immune} for BICA or \cite{schmidhuber1992learning} for Factorial codes. These methods, however, do not guarantee convergence to the global optimal solution.

Looking at the BICA framework, we notice two fundamental a-priori assumptions:
\begin{enumerate}

\item	The vector $\underline{X}$ is a mixture of independent components and there exists an inverse transformation which decomposes these components.
\item	The generative model (linear, XOR field, etc.) of the mixture function is known.

\end{enumerate}

In this work we drop these assumptions and solve the ICA problem over finite alphabets with no prior assumption on the vector $\underline{X}$. 
As a first step towards this goal, let us drop Assumption $2$ and keep Assumption $1$, stating that underlying independent components do exist. The following combinatorial algorithm proves to solve this problem, over the binary alphabet, in $O(n \cdot 2^n)$ computations.

\subsection{Generalized BICA with Underlying Independent Components}

In this section we assume that underlying independent components exist. In other words, we assume there exists a permutation $\underline{Y}=g(\underline{X})$  such that the vector $\underline{Y}$ is statistically independent $P(Y_1,\ldots,Y_n)=\prod_{i=1}^{n}{P(Y_i)}$. Denote the marginal probability of the $i^{th}$ bit equals $0$ as $\pi_i=P(Y_i=0)$. Notice that by possibly inverting bits we may assume every $\pi_i$ is at most $\frac{1}{2}$ and by reordering we may have,
without loss of generality, that $\pi_n \leq  \pi_{n-1} \leq  \cdots  \leq  \pi_1 \leq 1/2$. In addition, we assume a non-degenerate setup where $\pi_n>0$. For simplicity of presentation, we first analyze the case where $\pi_n <  \pi_{n-1} <  \cdots  <  \pi_1 \leq 1/2$. This is easily generalized to the case where several $\pi_i$ may equal, as discussed later in this section. 

Denote the $2^n$ probabilities of $P(\underline{Y}=\underline{y} )$
as $p_1, p_2, \ldots , p_N$, assumed to be ordered so that $p_1 \leq p_2 \leq \cdots \leq 
p_N$.
We first notice that the probability of the all-zeros word, $P(Y_n=0, Y_{n-1}=0, \ldots ,Y_1=0)=\prod_{i=1}^{n}{\pi_i}$  is the smallest possible probability since all parameters are not greater than 0.5. Therefore we have $p_1=\prod_{i=1}^{n}{\pi_i}$.

Since $\pi_1$ is the largest parameter of all $\pi_i$, the second smallest probability is just $P(Y_n=0,\ldots,Y_2=0,Y_1=1)=\pi_n \cdot \pi_{n-1} \cdot \ldots \cdot \pi_2 \cdot (1-\pi_1 )=p_2$. Therefore we can recover $\pi_1$ from $\frac{1-\pi_1}{\pi_1} =\frac{p_2}{p_1}$,  leading to $\pi_1=\frac{p_1}{p_1+p_2}$.

We can further identify the third smallest probability as $p_3=\pi_n \cdot \pi_{n-1} \cdot \ldots \cdot \pi_3 \cdot (1-\pi_2) \cdot \pi_1$. This leads to $\pi_2=\frac{p_1}{p_1+p_3}$.

However, as we get to $p_4$ we notice we can no longer definitely identify its components; it may either equal $\pi_n \cdot \pi_{n-1} \cdot \ldots \cdot \pi_3 \cdot (1-\pi_2) \cdot (1-\pi_1)$ or $\pi_n \cdot \pi_{n-1} \cdot \ldots \cdot (1-\pi_3 )\cdot \pi_2 \cdot \pi_1$. This ambiguity is easily resolved since we can compute the value of $\pi_n \cdot \pi_{n-1} \cdot \ldots \cdot \pi_3 \cdot (1-\pi_2) \cdot (1-\pi_1)$ from the parameters we already found and compare it with $p_4$. Specifically, If $\pi_n \cdot \pi_{n-1} \cdot \ldots \cdot \pi_3 \cdot (1-\pi_2) \cdot (1-\pi_1) \neq p_4$ then we necessarily have $\pi_n \cdot \pi_{n-1} \cdot \ldots \cdot (1-\pi_3 )\cdot \pi_2 \cdot \pi_1 = p_4$ from which we can recover $\pi_3$. Otherwise $\pi_n \cdot \pi_{n-1} \cdot \ldots \cdot (1-\pi_3 )\cdot \pi_2 \cdot \pi_1 = p_5$ from which we can again recover $\pi_3$ and proceed to the next parameter.

Let us generalize this approach. Denote $\Lambda_k$ as a set of probabilities of all words whose $(k+1)^{th}, \dots, n^{th}$ bits are all zero.
\begin{theorem} 
\label{theorem1}
Let $j$ be an arbitrary index in $\{1,2,\dots,N\}$.
Assume we are given that $p_j$, the $j^{th}$ smallest probability in the set of probabilities $P(Y_n,\ldots,Y_1)$, satisfies the following decomposition 
\begin{equation}\nonumber
p_j=\pi_n \cdot \pi_{n-1} \cdot \ldots \cdot \pi_{k+1} \cdot (1-\pi_k) \cdot \pi_{k-1}  \cdot \ldots \cdot ∙\pi_1.
\end{equation}
 Further assume the values of $\Lambda_{k-1}$ are all given in a sorted manner. Then the complexity of finding the value of $\pi_n \cdot \pi_{n-1} \cdot \ldots \cdot \pi_{k+2} \cdot (1-\pi_{k+1}) \cdot \pi_k \cdot \ldots \cdot ∙\pi_1$, and calculating and sorting the values of $\Lambda_k$  is $O(2^k)$.
\end{theorem}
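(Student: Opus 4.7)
The plan is to exploit the recursive structure of $\Lambda_k$. By definition, a word contributing to $\Lambda_k$ has bits $k+1,\ldots,n$ equal to zero; its $k$-th bit is either $0$ (which puts it exactly in $\Lambda_{k-1}$) or $1$ (which multiplies the corresponding probability from $\Lambda_{k-1}$ by the single factor $r_k \triangleq (1-\pi_k)/\pi_k > 1$). Hence, as multisets, $\Lambda_k = \Lambda_{k-1}\,\cup\,r_k\,\Lambda_{k-1}$, and this is the identity that drives the whole argument.

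The first step is to recover $\pi_k$ from the given decomposition. Since $p_1 = \prod_i \pi_i$ is the all-zeros probability, the hypothesis on $p_j$ yields $p_j/p_1 = (1-\pi_k)/\pi_k$, so $\pi_k = p_1/(p_1+p_j)$ and the factor $r_k$ are obtained in $O(1)$. I would then multiply every entry of the already-sorted $\Lambda_{k-1}$ by the positive constant $r_k$; this produces a second sorted list of length $2^{k-1}$, and a standard two-way merge with $\Lambda_{k-1}$ yields the sorted $\Lambda_k$ of size $2^k$ in $O(2^k)$ time.

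The second step is to identify the value $v \triangleq \pi_n\cdots\pi_{k+2}(1-\pi_{k+1})\pi_k\cdots\pi_1$. The key lemma is that $v$ is the smallest probability in the entire joint distribution that does not belong to $\Lambda_k$. To prove it, write any probability outside $\Lambda_k$ as $p_1 \cdot \prod_{i\in S}(1-\pi_i)/\pi_i$ with $S\cap\{k+1,\ldots,n\}\neq\emptyset$. Every factor exceeds $1$ because $\pi_i\le 1/2$, and because $\pi_n<\cdots<\pi_1$ any factor indexed in $\{k+1,\ldots,n\}$ is at least $(1-\pi_{k+1})/\pi_{k+1}$; therefore the product is minimized uniquely by $S=\{k+1\}$, giving $v$. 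To extract $v$ algorithmically, I would walk the globally sorted list $p_1\le p_2\le\cdots$ together with the sorted $\Lambda_k$ in merge fashion and return the first entry of the former that fails to match the current entry of the latter. Since $|\Lambda_k|=2^k$, this scan terminates after at most $2^k+1$ comparisons.

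Summing the three stages gives the claimed $O(2^k)$ bound. The delicate part of the argument is the minimality of $v$ outside $\Lambda_k$, which depends essentially on the assumed ordering $\pi_n<\cdots<\pi_1\le 1/2$; everything else reduces to a straightforward merge analysis on sorted lists of length $2^{k-1}$.
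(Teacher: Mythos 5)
Your proposal is correct and takes essentially the same route as the paper: form $\bar{\Lambda}_{k-1}=\frac{1-\pi_k}{\pi_k}\Lambda_{k-1}$, merge the two sorted lists of size $2^{k-1}$ in $O(2^k)$, and read off the sought value $\pi_n \cdots \pi_{k+2}(1-\pi_{k+1})\pi_k \cdots \pi_1$ as the first place where the globally sorted probability list and the sorted $\Lambda_k$ disagree. The only differences are minor: you make explicit the minimality lemma (that $S=\{k+1\}$ minimizes the product over all words outside $\Lambda_k$), which the paper uses implicitly, and you locate the discrepancy by a linear merge-scan where the paper uses an $O(k)$ binary search --- both comfortably within the $O(2^k)$ budget.
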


\begin{proof} 
Since the values of  $\Lambda_{k-1}$ and $\pi_k$ are given we can calculate the values which are still missing to know $\Lambda_k$ entirely by simply multiplying each element of $\Lambda_{k-1}$ by $\frac{1-\pi_k}{\pi_k}$.  Denote this set of values as $\bar{\Lambda}_{k-1}$. Since we assume the set $\Lambda_{k-1}$ is sorted then $\bar{\Lambda}_{k-1}$ is also sorted and the size of each set is $2^{k-1}$. Therefore, the complexity of sorting $\Lambda_{k}$ is the complexity of merging two sorted lists, which is $O(2^k)$.

In order to find the value of $\pi_n \cdot \pi_{n-1} \cdot \ldots \cdot \pi_{k+2} \cdot (1-\pi_{k+1}) \cdot \pi_k \cdot \ldots \cdot ∙\pi_1$ we need to go over all the values which are larger than $p_j$ and are not in $\Lambda_k$. However, since both the list of all probabilities $\{p_i\}_{i=1}^{N}$ and the set $\Lambda_k$ are sorted we can perform a binary search to find the smallest entry for which the lists differ. The complexity of such search is  $O(\log⁡{(2^k)})=O(k)$ which is smaller than $O(2^k)$. Therefore, the overall complexity is $O(2^k)$  $\blacksquare$
\end{proof}

Our algorithm is based on this theorem. We initialize the values $p_1, p_2$ and $\Lambda_{1}$, and   
for each step $k=3 \ldots n$ we calculate $\pi_n \cdot \pi_{n-1} \cdot \ldots \cdot \pi_{k+1} \cdot (1-\pi_{k}) \cdot \pi_{k-1} \cdot \ldots \cdot ∙\pi_1$ and $\Lambda_{k-1}$. 

The complexity of our suggested algorithm is therefore $\sum_{k=1}^{n}{O(2^k)}=O(n\cdot2^n)$. 
However, we notice that by means of the well-known quicksort algorithm \cite{Hoare:1961:AQ:366622.366644}, the complexity of our preprocessing sorting phase is $O(N\log{(N)})=O(n\cdot2^n)$.
Therefore, in order to find the optimal permutation we need $O(n\cdot2^n)$ for sorting the given probability list and $O(2^n)$ for extracting the parameters of $P(X_1, \ldots ,X_n)$.
Notice that this algorithm is combinatorial in its essence and is not robust when dealing with real data. In other words, the performance of this algorithm strongly depends on the accuracy of $P(X_1, \ldots ,X_n)$ and does not necessarily converge towards the optimal solution when applied on estimated probabilities.  

Let us now drop the assumption that the values of $\pi_i$'s are non-equal. That is,  $\pi_n \leq  \pi_{n-1} \leq  \cdots  \leq  \pi_1 \leq 1/2$. It may be easily verified that both
Theorem \ref{theorem1} and our suggested algorithm still hold, with the difference that instead of choosing the \textit{single} smallest entry in which the probability lists differ, we may choose \textit{one of the} (possibly many) smallest entries. This means that instead of recovering the unique value of $\pi_k$ at the $k^{th}$ iteration (as the values of $\pi_i$'s are assumed non-equal), we recover the $k^{th}$ smallest value in the list  $\pi_n \leq  \pi_{n-1} \leq  \cdots  \leq  \pi_1 \leq 1/2$.   

\subsection{Generalized BICA via Search Tree Based Algorithm}

We now turn to the general form of our problem (\ref{eq:min_criterion}) with no further assumption on the vector $\underline{X}$.

We denote $\Pi_i$= \{all words whose $i^{th}$ bit equals 0\}. In other words, $\Pi_i$ is the set of words that ``contribute" to $P(Y_i=0)$. We further denote the set of $\Pi_i$ that each word is a member of as $\Gamma_k$ for all $k=1\ldots N$ words. For example, the all zeros word $\{00\ldots0\}$ is a member of all $\Pi_i$ hence $\Gamma_1=\{ \Pi_1,\ldots,\Pi_n\}$. 

We define the optimal permutation as the permutation of the $N$ words that achieves the minimum of $C(\underline{Y})$ such that $\pi_i \leq 1/2$ for every $i$. 

Let us denote the binary representation of the $k^{th}$ word with $\underline{y}(k)$.
Looking at the $N$ words of the vector $\underline{Y}$ we say that a word $\underline{y}(k)$ is majorant to $\underline{y}(l)$ $(\underline{y}(k) \succeq \underline{y}(l))$ if  $\Gamma_l \subset \Gamma_k$. 
In other words, $\underline{y}(k)$ is majorant to $\underline{y}(l)$ iff for every bit in $\underline{y}(l)$ that equals zeros,  the same bit equals zero in $\underline{y}(k)$.  
In the same manner a word $\underline{y}(k)$ is minorant to $\underline{y}(l)$  $(\underline{y}(k)\preceq \underline{y}(l))$ if  $\Gamma_k \subset \Gamma_l$, that is iff for every bit in $\underline{y}(k)$ that equals zeros, the same bit equals zero in $\underline{y}(l)$. For example, the all zeros word $\{00\ldots0\}$ is majorant to all the words, while the all ones word $\{11\ldots1\}$ is minorant to all the word as non of its bits equals zeros.

We say that $\underline{y}(k)$ is a largest minorant to $\underline{y}(l)$ if there is no other word that is minorant to $\underline{y}(l)$ and majorant to $\underline{y}(k)$. We also say that there is a partial order between $\underline{y}(k)$  and $\underline{y}(l)$ if one is majorant or minorant to the other.  

\begin{theorem}  \label{th:partial_order}
The optimal solution must satisfy $P(\underline{y}(k))\geq P(\underline{y}(l))$ for all $\underline{y}(k)\preceq \underline{y}(l)$.
\end{theorem}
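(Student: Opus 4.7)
The plan is to prove this by a contradiction (swap) argument: if some minorant pair violates the inequality in the supposed optimum, a direct swap of their probability values produces a strictly better permutation, contradicting optimality.

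First I would set up the swap. Suppose the optimal permutation assigns probability $p_k=P(\underline{y}(k))$ and $p_l=P(\underline{y}(l))$ with $\underline{y}(k) \preceq \underline{y}(l)$ (so $\Gamma_k \subsetneq \Gamma_l$) but $p_k < p_l$. Build a new permutation by exchanging the probability masses at $\underline{y}(k)$ and $\underline{y}(l)$, leaving all other words untouched. Because this rearranges the same multiset of probabilities over the same $N$ words, the joint entropy $H(\underline{Y})$ is unchanged, so the new configuration is still an invertible transformation of $\underline{X}$ and only the marginals $P(Y_i=0)$ can move.

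Next I would track exactly which marginals change. Writing $\pi_i = P(Y_i=0) = \sum_{\underline{y}\in\Pi_i} P(\underline{y})$, the swap alters this sum only when exactly one of $\underline{y}(k),\underline{y}(l)$ lies in $\Pi_i$, i.e.\ when $i\in \Gamma_k \triangle \Gamma_l$. Since $\Gamma_k \subsetneq \Gamma_l$, this symmetric difference equals the nonempty set $\Gamma_l\setminus \Gamma_k$; for every such $i$ the quantity $\pi_i$ decreases by exactly $p_l - p_k > 0$, while for all other $i$ it is unchanged.

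Now I would invoke the sign constraint. The optimal permutation is defined so that $\pi_i \leq 1/2$ for every $i$, and on $[0,1/2]$ the binary entropy $h_b$ is strictly increasing. For $i\in \Gamma_l\setminus \Gamma_k$ we move $\pi_i$ downward inside $[0,1/2]$, hence $h_b(\pi_i)$ strictly decreases; for all other $i$ the contribution is the same. Summing, $\sum_i h_b(\pi_i)$ strictly decreases, which contradicts optimality of the original permutation under the objective (\ref{eq:sum_ent_min_binary}). Therefore every minorant pair in the optimum must satisfy $P(\underline{y}(k)) \geq P(\underline{y}(l))$.

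The main subtlety, and the only place care is needed, is the $\pi_i \leq 1/2$ feasibility: one must justify both that the assumed optimum lies in this regime (true by the bit-flipping convention under which ``optimal permutation'' is defined in the paragraph preceding the theorem) and that decreasing $\pi_i$ keeps it in this regime so that the monotonicity of $h_b$ gives a strict improvement rather than a spurious one caused by crossing $1/2$. Everything else is bookkeeping on the sets $\Gamma_k,\Gamma_l$.
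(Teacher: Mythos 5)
Your proof is correct and takes essentially the same route as the paper's: a swap of the two probability masses, noting only the marginals indexed by $\Gamma_l\setminus\Gamma_k$ change, yielding a strict decrease of the objective and a contradiction with optimality. You are in fact slightly more careful than the paper's own argument, since you track every index in $\Gamma_l\setminus\Gamma_k$ (the paper phrases it as a single modified $\Pi_j$) and you explicitly justify the strict improvement via monotonicity of $h_b$ on $[0,1/2]$ and feasibility of the swapped configuration.
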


\begin{proof} 
Assume there exists $\underline{y}(k) \preceq \underline{y}(l), k \neq l$ such that $P(\underline{y}(k))<P(\underline{y}(l))$ which achieves the lowest (optimal) $C(\underline{Y})$. Since $\underline{y}(k) \preceq \underline{y}(l)$ then, by definition, $\Gamma_k \subset \Gamma_l$.  This means there exists $\Pi_j$ which satisfies $\Pi_j \in \Gamma_l \setminus \Gamma_k$. Let us now exchange (swap) the words $\underline{y}(k)$ and $\underline{y}(l)$. Notice that this swapping only modifies $\Pi_j$ but leaves all other $\Pi_i$'s untouched. Therefore this swap leads to a lower $C (\underline{Y})$ as the sum in (\ref{eq:sum_ent_min_binary}) remains untouched apart from its $j^{th}$ summand which is lower than before. This contradicts the optimality assumption $\blacksquare$
\end{proof}

We are now ready to present our algorithm. As a preceding step let us sort the probability vector $\underline{p}$ (of \underline{X}) such that $p_i \leq p_j$  for all $i<j$.
As described above, the all zeros word is majorant to all words and the all ones word is minorant to all words. Hence, the smallest probability $p_1$ and the largest probability $p_N$ are allocated to them respectively, as Theorem \ref{th:partial_order} suggests. We now look at all words that are largest minorants to the all zeros word.

Theorem \ref{th:partial_order} guarantees that $p_2$ must be allocated to one of them. We shall therefore examine all of them. This leads to a search tree structure in which every node corresponds to an examined allocation of $p_i$. In other words, for every allocation of $p_i$ we shall further examine the allocation of $p_{i+1}$ to each of the largest minorants that are still not allocated.  This process ends once all possible allocations are examined.

The following example (Figure \ref{fig:exhaustive}) demonstrates our suggested algorithm with $n=3$. The algorithm is initiated with the allocation of $p_1$ to the all zeros word. In order to illustrate the largest minorants to $\{000\}$ we use the chart of the partial order at the bottom left of Figure \ref{fig:exhaustive}. As visualized in the chart, a word $\underline{y}(l)$ is minorant to $\underline{y}(k)$ if it lies within one of the shapes (e.g. ellipses, rectangles) that also contains $\underline{y}(k)$. Therefore, the largest minorants to $\{000\}$ are $\{001\}$, $\{010\}$ and $\{100\}$. As we choose to investigate the allocation of $p_2$ to $\{001\}$ we notice that resulting largest minorants that are still not allocated are $\{010\}$ and $\{100\}$. We then investigate the allocation of $p_3$ to $\{010\}$, for example, and continue until all $p_i$ are allocated.  

\begin{figure}[h!]
\centering
\includegraphics[width = 0.6\textwidth,bb= 50 469 550 720,clip]{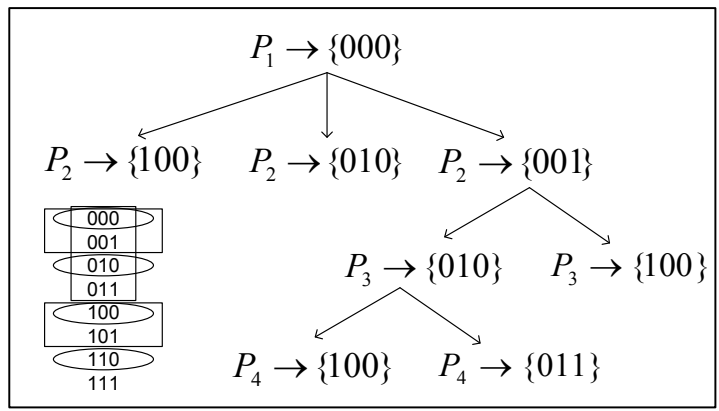}
\caption{Search tree based algorithm with $n=3$}
\label{fig:exhaustive}
\end{figure}

This search tree structure can be further improved by introducing a depth-first branch and bound enhancement. This means that before we examine a new branch in the tree we bound the minimal objective it can achieve (through allocation of the smallest unallocated probability to all of its unallocated words for example).  

The asymptotic computational complexity of this branch and bound search tree is quite difficult to approximate. There are, however, several cases where a simple solution exists (for example, for $n=2$ it is easy to show that the solution is to allocate all four probabilities in ascending order).

\subsection{Generalized BICA via Piecewise Linear Relaxation Algorithm}
\label{The Relaxed Generalized BICA}

In this section we present a different approach which bounds the optimal solution from above as tightly as we want in $O(n^k \cdot 2^n)$, where $k$ defines how tight the bound is. Throughout this section we assume that $k$ is a fixed value, for complexity analysis purposes.  

Let us first notice that the problem we are dealing with (\ref{eq:sum_ent_min_binary}) is a concave minimization problem over a discrete permutation set which is NP hard. However, let us assume for the moment that instead of our ``true" objective (\ref{eq:sum_ent_min_binary}) we have a simpler linear objective function. That is, 

\begin{equation} \label{eq:linear}
{\displaystyle L(\underline{Y})=\sum_{i=1}^{n}{a_i \pi_i+b_i}=\sum_{i=1}^{N}{c_i P(\underline{Y}=\underline{y}(i))}+d}
\end{equation}
where the last equality changes the summation over $n$ bits to a summation over all $N=2^n$ words (this change of summation is further discussed in Section \ref{matrix-vector}).

In order to minimize this objective over the $N$ given probabilities $\underline{p}$ we simply sort these probabilities in a descending order and allocate them such that the largest probability goes with the smallest coefficient $c_i$ and so on. Assuming both the coefficients and the probabilities are known and sorted in advance, the complexity of this procedure is linear in $N$.

We now turn to the generalized binary ICA problem as defined in (\ref{eq:sum_ent_min_binary}). Since our objective is concave we would first like to bound it from above with a piecewise linear function which contains $k$ pieces, as shown in Figure \ref{fig:piecewise linear}. In this paper we do not discuss the construction of such upper-bounding piecewise linear function, nor tuning methods for the value of $k$, and assume this function is given for any fixed $k$. Notice that the problem of approximating concave curves with piecewise linear functions is very well studied (for example \cite{gavrilovic1975optimal}) and may easily be modified to the upper bound case. 
We show that solving the piecewise linear problem approximates the solution to (\ref{eq:sum_ent_min_binary}) as closely as we want, in significantly lower complexity. 

\begin{figure}[h]
\centering
\includegraphics[width = 0.50\textwidth,bb= 100 260 520 520,clip]{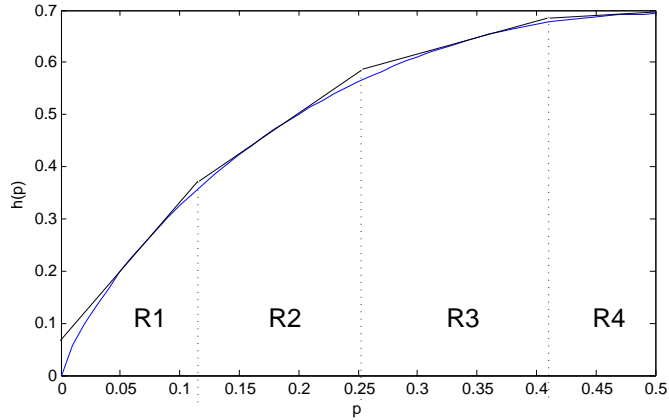}
\caption{piecewise linear ($k=4$) relaxation to the binary entropy}
\label{fig:piecewise linear}
\end{figure}

From this point on we shall drop the previous assumption that $\pi_n \leq  \pi_{n-1} \leq  \cdots  \leq  \pi_1$, for simplicity of presentation. 
First, we notice that all $\pi_i's$ are equivalent (in the sense that we can always interchange them and achieve the same result). This means we can find the optimal solution to the piecewise linear problem by going over all possible combinations of “placing” the $n$ variables $\pi_i$ in the $k$ different regions of the piecewise linear function. For each of these combinations we need to solve a linear problem (such as in (\ref{eq:linear}), where the minimization is with respect to allocation of the  $N$ given probabilities $\underline{p}$) with additional constraints on the ranges of each $\pi_i$. For example, assume $n=3$ and the optimal solution is such that two $\pi_i's$ (e.g. $\pi_1$ and $\pi_2$) are at the first region, $R_1$, and $\pi_3$ is at the second region, $R_2$. Then, we need to solve the following constrained linear problem,

\begin{equation} \label{eq:linear_program}
\begin{aligned}
& {\text{minimize}}
& &  a_1 \cdot (\pi_1+\pi_2)+2b_1+a_2\cdot \pi_3 +b_2 \\
& \text{subject to}
& & \pi_1, \pi_2 \in R_1 , \pi_3 \in R_2
\end{aligned}
\end{equation}

where the minimization is over the allocation of the given $\left\{p_i\right\}_{i=1}^{N}$, which determine the corresponding $\pi_i$'s, as demonstrated in (\ref{eq:linear}). This problem is again NP hard. However, if we attempt to solve it without the constraints we notice the following:

\begin{enumerate}

\item	If the collection of $\pi_i's$ which define the optimal solution to the unconstrained linear problem happens to meet the constraints then it is obviously the optimal solution with the constraints.
\item	If the collection of $\pi_i's$ of the optimal solution do not meet the constraints (say, $\pi_2 \in R_2$) then, due to the concavity of the entropy function, there exists a different combination with a different constrained linear problem (again, over the allocation of the $N$ given probabilities $\underline{p}$), 

\begin{equation*} \label{eq:linear_program}
\begin{aligned}
& {\text{minimize}}
& &  a_1\pi_1+b_1+a_2(\pi_2+\pi_3)+2b_2 \\
& \text{subject to}
& & \pi_1 \in R_1  \;  \pi_2,\pi_3 \in R_2
\end{aligned}
\end{equation*}

in which this set of $\pi_i's$ necessarily achieves a lower minimum (since $a_2 x+b_2<a_1 x+b_1$ $\forall x \in R_2$).

\end{enumerate}
	
Therefore, in order to find the optimal solution to the piecewise linear problem, all we need to do is to go over all possible combinations of placing the $\pi_i's$ in the $k$ different regions, and for each combination solve an unconstrained linear problem (which is solved in a linear time in $N$). If the solution does not meet the constraint then it means the assumption that the optimal $\pi_i$ reside within this combination's regions is false. Otherwise, if the solution does meet the constraint, it is considered as a candidate for the global optimal solution. 

The number of combinations is equivalent to the number of ways of placing $n$ identical balls in $k$ boxes, which is (for a fixed $k$), 

\begin{equation}
\left(\begin{array}{c} 
n+k-1\\n\end{array}\right)=\left(\begin{array}{c} 
n+k-1\\k-1\end{array}\right) \leq \frac{(n+k-1)^{k-1}}{(k-1)!} = O(n^k).
\end{equation}

Assuming the coefficients are all known and sorted in advance, for any fixed $k$ the overall asympthotic complexity of our suggested algorithm, as $n \rightarrow \infty$, is just $O(n^k \cdot 2^n)$.

\subsection{The Relaxed Generalized BICA as a single matrix-vector multiplication}
\label{matrix-vector}
It is important to notice that even though the asymptotic complexity of our approximation algorithm is $O(n^k\cdot2^n)$, it takes a few seconds to run an entire experiment on a standard personal computer for as much as $n=10$, for example. The reason is that the $2^n$ factor refers to the complexity of sorting a vector and multiplying two vectors,  
operations which are computationally efficient on most available software. Moreover, if we assume that the coefficients in (\ref{eq:linear}) are already calculated, sorted and stored in advance, we can place them in a matrix form $A$ and multiply the matrix with the (sorted) vector $\underline{p}$. The minimum of this product is exactly the solution to the linear approximation problem. Therefore, the practical complexity of the approximation algorithm drops to a single multiplication of a ($n^k \times 2^n$) matrix with a ($2^n \times 1$) vector

Let us extend the analysis of this matrix-vector multiplication approach. 

Each row of the matrix $A$ corresponds to a single coefficient vector to be sorted and multiplied with the sorted probability vector $\underline{p}$. 
Each of these coefficient vectors correspond to one possible way of placing $n$ bits in $k$ different regions of the piecewise linear function. Specifically, in each row, each of the $n$ bits is assumed to reside in one of the $k$ regions, hence it is assigned a slope $a_i$ as indicated in (\ref{eq:linear}). For each row, our goal is to minimize $L(\underline{Y})$. Since this minimization is solved over the vector $\underline{p}$ we would like to change the summation accordingly. To do so, each entry of the coefficient vector (denoted as $c_i$ in (\ref{eq:linear})) is calculated by summing all the slopes that correspond to each $\pi_i$. For example, let us assume $n=3$ where $\pi_1,\pi_2 \in R_1$, with a corresponding slope $a_1$ and intercept $b_1$, while the $\pi_3 \in R_2$ with $a_2$ and $b_2$. We use the following mapping: $P(\underline{Y}=000)=p_1, P(\underline{Y}=001)=p_2, \dots , P(\underline{Y}=111)=p_8$. Therefore

\begin{equation}
\begin{array}{c} 
\pi_1=P(Y_1=0)=p_1+p_2+p_3+p_4\\
\pi_2=P(Y_2=0)=p_1+p_2+p_5+p_6\\
\pi_3=P(Y_3=0)=p_1+p_3+p_5+p_7\\
\end{array}.
\end{equation}

The corresponding coefficients $c_i$ are then the sum of rows of the following matrix

\begin{equation}
A=\left(
\begin{array}{ccc} 
a_1&a_1& a_2\\
a_1& a_1& 0\\
a_1& 0& a_2\\
a_1& 0& 0\\
0& a_1& a_2\\
0& a_1& 0\\
0& 0& a_2\\
0& 0& 0\\
\end{array}\right).
\end{equation}

This leads to a minimization problem 
\begin{align}
 &  L(\underline{Y})=\sum_{i=1}^{n}{a_i \pi_i+b_i}=\nonumber \\ 
& a_1(\pi_1+\pi_2)+a_2\pi_3+2b_1+b_2= \\[0.5em] \nonumber
& (2a_1+a_2)p_1+2a_1p_2+(a_1+a_2)p_3+a_1p_4+(a_1+a_2)p_5+a_1p_6+a_2p_7+2b_1+b_2 \nonumber
\end{align}

where the coefficients of $p_i$ are simply the sum of the $i^{th}$ row in the matrix $A$.

Now let us assume that $n$ is greater than $k$ (which is usually the case). It is easy to see that many of the coefficients $c_i$ are actually identical in this case. Precisely, let us denote by $l_i$ the number of assignments for the $i^{th}$ region. Then, the number of unique $c_i$ coefficients is simply 
\begin{equation*}
\prod_{i=1}^{k}(l_i+1)-1
\end{equation*}
subject to $\sum_{i=1}^{k}{l_i}=n$.
Since we are interested in the worst case (of all rows of the matrix $A$), we need to find the 
 non-identical coefficients. This is obtained when $l_i$ is as "uniform" as possible. Therefore we can bound the number of non-identical coefficients from above by temporarily dropping the assumption that $l_i$'s are integers and letting $l_i=\frac{n}{k}$ so that

\begin{equation}
 \max{\prod_{i=1}^{k}(l_i+1)} \leq \left(\frac{n}{k}+1\right)^{k}=O(n^k).
 \end{equation}

This means that instead of sorting the $2^n$ coefficients for each row of the matrix $A$, we only need to sort $O(n^k)$ coefficients. 

Now, let us further assume that the data is generated from some known parametric model. In this case, some probabilities $p_i$ may also be identical, so that the probability vector $\underline{p}$ may also not require $O(n \cdot 2^n)$ operations to be sorted. For example, if we assume a block independent structure, such that $n$ components (bits) of the data are generated from $\frac{n}{r}$ independent and identically distributed blocks of of size $r$, then it can be shown that the probability vector $\underline{p}$ contains at most
\begin{equation}
 \left(\begin{array}{c} 
\frac{n}{r}+2^r-1\\\frac{n}{r}\end{array}\right)=O\left({\left(\frac{n}{r}\right)}^{2^r}\right)
\end{equation}
non-identical elements $p_i$. Another example is a first order stationary symmetric  Markov model. In this case there only exists a quadratic number, $ n\cdot(n-1)+2=O(n^2) $, of non-identical probabilities in $\underline{p}$ (see Appendix).

This means that applying our relaxed Generalized BICA on such datasets may only require $O(n^k)$ operations for the matrix $A$ and a polynomial number of operations (in $n$) for the vector $\underline{p}$; hence our algorithm is reduced to run in a polynomial time in $n$. 

Notice that this derivation only considers the number of non-identical elements to be sorted through a quicksort algorithm. However, we also require the degree of each element (the number of times it appears) to eventually multiply the matrix $A$ with the vector $\underline{p}$. This, however, may be analytically derived through the same combinatorical considerations described above.

\subsection{Relaxed Generalized BICA Illustration and Experiments}

In order to validate our approximation algorithm we conduct several experiments. 

In the first experiment we randomly generate a probability distribution with $n=10$ statistically independent components and mix its components in a non-linear fashion. We apply the approximation algorithm on this probability distribution with different values of $k$ and compare the approximated minimum entropy we achieve (that is, the result of the upper-bound piecewise linear cost function) with the entropy of the vector. In addition, we apply the estimated parameters $\pi_i$ on the true objective (\ref{eq:sum_ent_min_binary}), to obtain an even closer approximation. Figure \ref{fig:experiment} demonstrates the results we achieve, showing the convergence of the approximated entropy towards the real entropy as the number of linear pieces increases.  Moreover, as we repeat this experiment several times (that is, randomly generate probability distributions and examine our approach for every single value of $k$), we see that the estimated parameters are equal to the independent parameters for $k$ as small as $4$, on average.
     
\begin{figure}[h]
\centering
\includegraphics[width = 0.50\textwidth,bb= 130 265 470 515,clip]{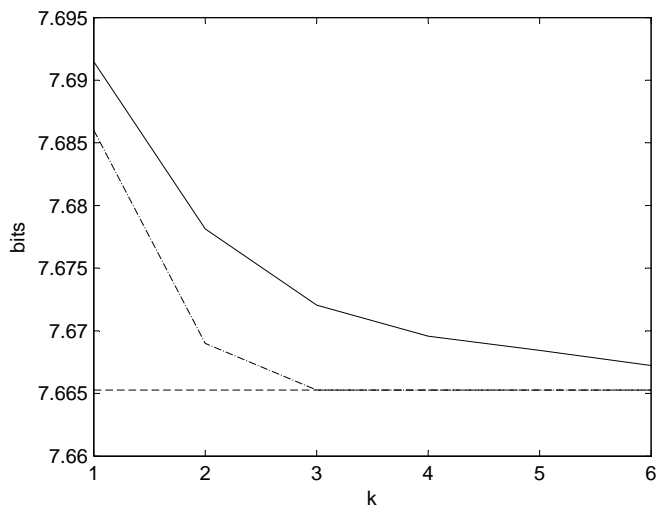}
\caption{Piecewise linear approximation (solid line), entropy according to the estimated parameters (dashed-dot line) and the real entropy (dashed horizontal line),  for a vector size $n=10$ and different $k$ linear pieces}
\label{fig:experiment}
\end{figure}

We further illustrate the use of the BICA tool by the following example on ASCII code. The ASCII code is a common standardized eight bit representation of western letters, numbers and symbols. We gather statistics on the frequency of each character, based on approximately 183 million words that appeared in the New York Times magazine \cite{jones2004case}. We then apply the BICA (with $k=8$, which is empirically sufficient) on this estimated probability distribution, to find a new eight bit representation of characters, such that the bits are "as statistically independent" as possible. We find that the entropy of the joint probability distribution is 4.8289 bits, the sum of marginal entropies using ASCII representation is 5.5284 bits and the sum of marginal entropies after applying BICA is just 4.8532 bits. This means that there exists a different eight bit representation of characters which allows nearly full statistical independence of bits. Moreover, through this representation one can encode each of the eight bit separately without losing more than 0.025 bits, compared to encoding the eight bits altogether.

\section{Generalized Independent Component Analysis Over Finite Alphabets} 
\label{GICA over FF}
\subsection{Piecewise Linear Relaxation Algorithm - Exhaustive Search}
\label{Exhaustive Search}
Let us extend the notation of the previous sections, denoting the number of components as $n$ and the alphabet size as $q$. We would like to minimize $\sum_{i=1}^{n}{H(Y_i)}$ where $Y_i$ is over an alphabet size $q$.
We first notice that we need $q-1$ parameters to describe the multinomial distribution of $Y_i$ such that all of the parameters are not greater than $\frac{1}{2}$. Therefore, we can bound from above the marginal entropy with a piecewise linear function in the range $[0,\frac{1}{2}]$, for each of the parameters of $Y_i$.  
We call a ($q-1$)-tuple of regions a \textit{cell}.
As in previous sections we consider $k$, the number of linear pieces, to be fixed. Notice however, that as $q$ and $n$ increase, $k$ needs also to take greater values in order to maintain the same level of accuracy. As mentioned above, in this work we do not discuss methods to determine the value of $k$ for given $q$ and $n$, and empirically evaluate it. 

Let us denote the number of cells to be visited in our approximation algorithm (Section \ref{The Relaxed Generalized BICA}) as $C$. Since each parameter is approximated by $k$ linear pieces and there are $q-1$ parameters, $C$ equals at most $k^{q-1}$.

In this case too, the parameters are exchangeable (in the sense that the entropy of a multinomial random variable with parameters $\{p_1,p_2,p_3\}$ is equal to the entropy of a multinomial random variable with parameters $\{p_2,p_1,p_3\}$, for example). Therefore, we do not need to visit all $k^{q-1}$ cells, but only a unique subset which disregards permutation of parameters. In other words, the number of cells to be visited is bounded from above by the number of ways of choosing $q-1$ elements (the parameters) out of $k$ elements (the number of pieces in each parameter) with repetition and without order. Notice this upper-bound (as opposed to full equality) is a result of not every combination being a feasible solution, as the sum of parameters may exceed $1$. Assuming $k$ is fixed and as $q \rightarrow \infty$ this equals 

\begin{equation}
\label{q^k}
\left(\begin{array}{c} 
q-1+k-1\\q-1\end{array}\right)=\left(\begin{array}{c} q-1+k-1\\k-1\end{array}\right)\leq\frac{(q-1+k-1)^{k-1}}{(k-1)!}=O(q^k).
\end{equation}
Therefore, the number of cells we are to visit is simply $C=\min\left(k^{q-1},O(q^k)\right)$. For sufficiently large $q$ it follows that $C=O\left(q^k\right)$.
As in the binary case we would like to examine all combinations of $n$ entropy values in $C$ cells. The number of iterations to calculate all possibilities is equal to the number of ways of placing $n$ identical balls in $C$ boxes, which is

\begin{equation}
\label{n^C}
\left(\begin{array}{c} 
n+C-1\\n\end{array}\right)=O\left(n^C\right).
\end{equation}

In addition, in each iteration we need to solve a linear problem which takes a linear complexity in $q^n$.
Therefore, the overall complexity of our suggested algorithm is $O\left(n^{C}\cdot q^n \right)$.

We notice however that for a simple case where only two components are mixed $(n=2)$, we can calculate (\ref{n^C}) explicitly
\begin{equation}
\left(\begin{array}{c} 
2+C-1\\2\end{array}\right)=\frac{C(C+1)}{2}.
\end{equation}

Putting this together with (\ref{q^k}), leads to an overall complexity which is polynomial in $q$, for a fixed $k$,

\begin{equation}
\left(\frac{q^k(q^k+1)}{2}q^2\right)=O\left(q^{2k+2}\right).
\end{equation}

Either way, the computational complexity of our suggested algorithm may result in an excessive runtime, to a point of in-feasibility, in the case of too many components or an alphabet size which is too large. This necessitates a heuristic improvement to reduce the runtime of our approach.

\subsection{Piecewise Linear Relaxation Algorithm - Objective Descent Search}
\label{objective descent search}
In Section \ref{The Relaxed Generalized BICA} we present the basic step of our suggested piecewise linear relaxation to the generalized binary ICA problem. As stated there, for each combination of placing $n$ components in $k$ pieces (of the piecewise linear approximation function) we solve a linear problem (LP). Then, if the solution happens to meet the constraints (falls within the ranges we assume) we keep it. Otherwise, due to the concavity of the entropy function, there exists a different combination with a different constrained linear problem in which this solution that we found necessarily achieves a lower minimum, so we disregard it.

This leads to the following objective descent search method: instead of searching over all possible combinations we shall first guess an initial combination as a starting point (say, all components reside in a single cell). We then solve its unconstrained LP. If the solution meets the constraint we terminate. Otherwise we visit the cell that meets the constraints of the solution we found. We then solve the unconstrained LP of that cell and so on. We repeat this process for multiple random initialization. 

\begin{algorithm}
\caption{Relaxed Generalized ICA For Finite Alphabets via Gradient Search}
\begin{algorithmic} [1]
\REQUIRE $\underline{p}$ = the probability function of the random vector $\underline{X}$
\REQUIRE $n$ = the number of components of $\underline{X}$
\REQUIRE $C$ = the number of cells which upper-bound the objective.
\REQUIRE $I$ = the number of initializations.
\STATE $opt \leftarrow \infty$, where the variable $opt$ is the minimum sum of marginal entropies we are looking for. 
\STATE $V \leftarrow \emptyset$, where $V$ is the current cells the algorithm is visiting.
\STATE $S\leftarrow \infty$, where $S$ is the solution of the current LP.

\STATE $i \leftarrow 1$.
\WHILE{$i \leq I$}

\STATE $V\leftarrow$ randomly select an initial combination of placing $n$ components in $C$ cells 
\STATE $S\leftarrow$ LP($V$) solve an unconstrained linear prograsm which corresponds to the selected combination, as appears in (\ref{eq:linear}). \label{marker}
\IF {the solution falls within the bounds of the cell} 
\IF {$H(S)<opt$} 
\STATE $opt\leftarrow H(S)$, the sum of marginal entropies which correspond to the parameters found by the LP 
\ENDIF
\STATE $i\leftarrow i+1$
\ELSE 
\STATE $V \leftarrow$ the cells in which $S$ reside.   
\STATE \textbf{goto 7}
\ENDIF
\ENDWHILE
\RETURN $opt$
\end{algorithmic}
\label{alg:algorithm}
\end{algorithm}

This suggested algorithm is obviously heuristic, which does not guarantee to provide the global optimal solution. Its performance strongly depends on the number of random initializations and the concavity of the searched domain. 

The following empirical evaluation demonstrates our suggested approach.
In this experiment we randomly generate a probability distribution with $n$ independent and identically distributed components over an alphabet size $q$. We then mix its components in a non-linear fashion. We apply the objective descent algorithm with a fixed number of initialization points ($I=1000$) and compare the approximated minimum sum of the marginal entropies with the true entropy of the vector.
Figure \ref{fig:experiment_q4} demonstrates the results we achieve for different values of $n$. We see that the objective descent algorithm approximates the correct components well for smaller values of $n$ but as $n$ increases the difference between the approximated minimum and the optimal minimum increases, as the problem becomes too overwhelming. 

\begin{figure}[h]
\centering
\includegraphics[width = 0.45\textwidth,bb= 50 180 550 590,clip]{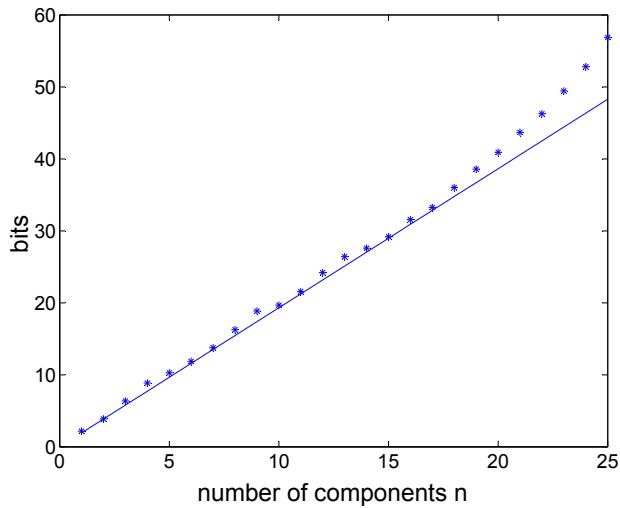}
\caption{The real entropy (solid line) and the sum of marginal entropies as discovered by the objective descent algorithm, for an i.i.d vector over an alphabet size $q=4$ and of varying number of components $n$}
\label{fig:experiment_q4}
\end{figure}

\section{Constrained Generalized Independent Component Analysis Over Finite Alphabets} 
\label{Constrained BICA}
The outcome of the Generalized ICA, regardless of its computational cost, is a mapping from all $q^n$ words of $\underline{X}$ to $q^n$ words in $\underline{Y}$, $g:\{1,\dots,q\}^n \rightarrow \{1,\dots,q\}^n$, where $q$ is the alphabet size of each component and $n$ is the number of components. This mapping alone may be too large to store in a database, not to mention retrieve and use. Therefore, we may be interested in a sub-optimal mapping that is more compact. One way of defining such a mapping is through imposing a constraint on the maximal number of components $X_i$ that every $Y_i$ depends on. This enables a simpler, function-like mapping that is easier and cheaper to implement on both hardware or software.

In this section we turn our attention back to the binary case, $q=2$. Recall that a balanced boolean function \cite{chakrabarty1998balanced} is a binary function whose output yields as many $0s$ as $1s$ over its input set. This means that for a uniformly random input string of bits, the probability of getting a $1$ is $\frac{1}{2}$.

We notice that each component of $\underline{Y}$ must hold as many $0s$ as $1s$, as the transformation is invertible (every possible word in $\underline{X}$ is mapped to a different word in $\underline{Y}$). This means that a necessary condition for $\underline{Y}$ to be invertible is that each component $Y_i$ is a balanced boolean function (of as many as $r$ components) of $\underline{X}$.

\subsection{Functions of Two Components ($r=2$)}
\label{r=2}
Let us first focus on a simple case where $r=2$. It is easy to verify that in this case, every balanced function is linear over $GF(2)$. Notice that this no longer applies for $r>2$, as the set of linear functions is just a subset of all balanced functions.
Therefore, for $r=2$, the solution is much easier to attain. For example, if we assume a structure where $Y_i=f_i(x_i,x_{i+1\pmod{n}})$ then there exist at most $2^{n+1}-2$ different invertible transformations to search, in order to find the optimal solution (see Appendix). Since the size of this search space is exponential in $n$, we may want to apply a heuristic based on Attux et al. \cite{attux2011immune}, to find a sub-optimal solution to this problem in fewer operations. 

In their work, Attux et al. suggest an immune-inspired algorithm for minimizing (\ref{eq:sum_ent_min}) over XOR field (linear transformations).  Their algorithm starts with a random "population" where each element in the population represents a valid transformation ($W$, an invertible matrix). At each step, the affinity function evaluates the objective $\left(1-\frac{1}{n}\sum_{i=1}^{n}{H(Y_i)}\right)$ for each element in the population, which is subsequently cloned. Then, the clones suffer a mutation process that is inversely proportional to their affinity, generating a new set of individuals. This new set is evaluated again in order to select the individual with highest affinity, for each group of clone individuals and their parent individual. The process is finished with a random generation of $d$ new individuals to replace the lowest affinity individuals in the population. The entire process is repeated until a pre-configured  number of repetitions is executed. Then, the solution with the highest affinity is returned. It is important to notice that the mutation phase is implemented with a random bit resetting routine, with the constraint of accepting only new individuals that form a nonsingular matrix (invertible transformation).

Therefore, we can simply adjust Attux et al.'s algorithm to satisfy our constraints by forcing each row in $W$ to hold at most $r$ elements equal to $1$. This change shall be embedded in the mutation phase, right before we verify that $W$ is non-singular. 

We notice that we can also apply this heuristic for cases where $r>2$. However, the solution we get is limited to linear transformations, which is a substantially smaller domain than all invertible transformations. 

\subsection{Functions of Multiple Components ($r>2$)}

In the case where $Y_i$ is a function of $r>2$ components we can no longer restrict ourselves to linear transformations in order to find the optimal transformation. 
We notice two major restrictions that define our search space:

\begin{enumerate}

\item	The transformation needs to be invertible.
\item	$Y_i$ is a function of no more than $r$ components of $\underline{X}$.

\end{enumerate}

We denote all the transformations in this search space as ``feasible" solutions. 
The two constraints mentioned dramatically reduce the number of feasible solutions to our optimization problem, yet they define a highly non-trivial search domain; given a feasible solution it is not clear how to find a different feasible solution. This makes the problem much more difficult to tackle and necessitates an efficient methodology for visiting all feasible solutions. 

There are two immediate ways to search through all feasible solutions. The first is going over all transformations that hold the first constraint (invertible transformations) and for each examining if it holds the second constraint. The second option is to go over all transformations for which the second constraint holds and keep only those that are also invertible.
Since there exist $2^n$ invertible transformations, the first option is infeasible. The following Branch-and-Feasible algorithm is based on the second option:

Let us first choose one of ${n \choose \sfrac{n}{2}}$ balanced boolean functions for $Y_1$, such that $Y_1$ is a function of no more than $r$ components of $\underline{X}$. Then, we shall again choose a balanced boolean function (of no more than $r$ components) for $Y_2$. However, not all functions are permitted as some of them may result in a non invertible transformation already at this step. This way, we grow branches of feasible balanced boolean functions. We notice that as we get deeper into the tree there exist fewer permitted splits (as it is harder to find functions that result in invertible transformations). Moreover, some choices of functions may result in branches that already exist in the tree. These branches shall obviously be pruned. 
A detailed discussion regarding the properties of this algorithm and its practical abilities are subject to future research.  

\section{Applications}
\subsection{Blind Source Separation}

We start this section with a classical ICA application, in Blind Source Separation (BSS). Assume there exist $n$ independent (or practically "almost independent") sources where each source is over an alphabet size $q$. These sources are mixed in an invertible, yet unknown manner. Our goal is to recover the sources from the mixture. 

For example, consider a case with $n=2$ sources $X_1,X_2$, where each source is over an alphabet size $q$. The sources are linearly mixed (over a finite field) such that $Y_1=X_1, Y_2=X_1+X_2$. 

However, due to a malfunction, the symbols of $Y_2$ are randomly shuffled, before it is handed to the receiver.
Notice this mixture (including the malfunction) is unknown to the receiver, who receives $Y_1,Y_2$ and strives to ``blindly" recover $X_1,X_2$. 
 In this case any linearly based method such as \cite{yeredor2011independent} or \cite{attux2011immune} would fail to recover the sources as the mixture, along with the malfunction, is now a non-linear invertible transformation. Our method on the other hand, is designed especially for such cases, where no assumption is made on the mixture (other than being invertible).

To demonstrate this example we introduce two independent sources $X_1,X_2$, over an alphabet size $q$. We apply the linear mixture $Y_1=X_1, Y_2=X_1+X_2$ and shuffle the symbols of $Y_2$. We are then ready to apply (and compare) our suggested methods for finite alphabet sizes, which are the exhaustive search method (Section \ref{Exhaustive Search}) and the objective descent method (Section \ref{objective descent search}).  

For the purpose of this experiment we assume both $X_1$ and $X_2$ are distributed according to a Zipf's law distribution, 
\begin{equation}\nonumber
P(k;s,q)=\frac{k^{-s}}{\sum_{m=1}^q m^{-s}}
\end{equation}
with a parameter $s=1.6$. The Zipf's law distribution is a commonly used heavy-tailed distribution. This choice of distribution is further motivated in Section \ref{large alphabet}.

We apply our suggested algorithms for different alphabet sizes, with a fixed $k=8$, and with only $100$ random initializations for the objective descent method. Figure \ref{fig:BSS} presents the results we achieve.

\begin{figure}[h]
\centering
\includegraphics[width = 0.7\textwidth,bb= 50 125 800 480,clip]{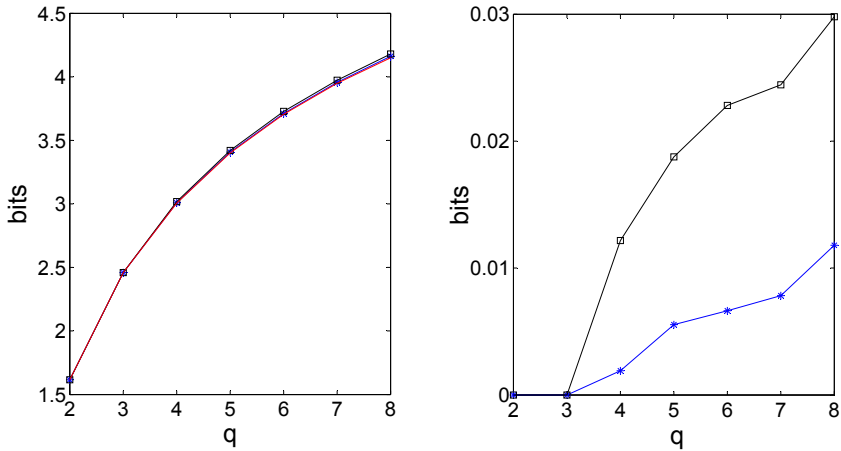}
\caption{BSS simulation results. Left: the lower curve is the joint entropy of $Y1,Y2$, the asterisks curve is the sum of marginal entropies using the exhaustive search method (Section \ref{Exhaustive Search}) while the curve with the squares corresponds the objective descent method (Section \ref{objective descent search}). Right: the curve with the asterisks corresponds to the difference between the exhaustive search method and the joint entropy while the curve with the squares is the difference between the objective descent search method and the joint entropy.}
\label{fig:BSS}
\end{figure}

We first notice that both methods are capable of finding a transformation for which the sum of marginal entropies is very close to the joint entropy. This means our suggested methods succeed in separating the non-linear mixture $Y_1,Y_2$ back to the statistically independent sources $X_1,X_2$, as we expected.  

Looking at the chart on the right hand side of Figure \ref{fig:BSS}, we notice that the difference between the two methods tends to increase as the alphabet size $q$ grows. This is not surprising since the search space grows while the number of random initializations remains fixed.  However, the difference between the two methods is still practically negligible, as we can see from the chart on the left. This is especially important since the objective descent method takes significantly less time to apply as the alphabet size $q$ grows.

\subsection{Source Coding - Generalization of Predictive Coding}
In his pioneering work from 1955, Peter Elias \cite{elias1955predictive} presented a sequential scheme for source coding. He showed that instead of coding a process directly, the encoder should first predict the current value of the process from its past realizations and then code the prediction error. The prediction errors are encoded component-wise, as if they are statistically independent. This way the encoder strives to eliminate any statistical dependency and only codes the information that is still not known. 

We generalize this procedure by saying that instead of sequential generation of statistically independent components (prediction errors), an improved encoder shall assemble a vector of observations and make it "as statistically independent as possible" all together. This can be shown to achieve improved performance (closer to the entropy of the vector) at the expense of working in batch. Though our suggested finite group ICA algorithms are of high complexity, it is important to notice that they are to be applied once, off-line, in order to generate the optimal mapping $g:\{0,1\}^n \rightarrow \{0,1\}^n$.

Compared with Huffman coding, the main advantage of the predictive coding scheme is that unlike sequential Huffman coding, it does not require an exponentially increasing codebook and utilizes a fixed codebook for each component it codes. It does however require an exponentially increasing “prediction book” so that each realization of past observations is mapped to a single prediction of the present. This problem may be evaded by limiting the prediction module to a parametric structure, as discussed in Section \ref{Constrained BICA}.

\subsection{Distributed Source Coding}
Let ${(x_1 (i),x_2 (i))}_{i=1}^{\infty}$ be a sequence of independent and identically distributed drawings of a pair of correlated discrete random variables $X_1$ and $X_2$. For lossless compression with $\hat{X}_1=X_1$ and $\hat{X}_2=X_2$ after decompression, we know from Shannon's source coding theory that a rate given by the joint entropy $H(X_1,X_2)$ of $X_1,X_2$ is sufficient if we are encoding them together, as in Figure \ref{fig:DSC1}.a. For example, we can first compress $X_1$ into $H(X_1)$  bits and based on the complete knowledge of $X_1$ at both the encoder and the decoder, we then compress $X_2$ into $H(X_2|X_1)$ bits. But what if $X_1$ and $X_2$ must be separately encoded for some user to reconstruct them together? 

One simple way is to do separate coding with rate $R=H(X_1)+H(X_2)$, which is greater than $H(X_1,X_2)$ when $X_1$ and $X_2$ are correlated. In a landmark paper \cite{slepian1973noiseless}, Slepian and Wolf showed that $R=H(X_1,X_2)$ is sufficient even for separate encoding of correlated sources (Figure \ref{fig:DSC1}.b). The Slepian--Wolf theorem says that the achievable region of Distributed Source Coding (DSC) for discrete sources $X_1$ and $X_2$ is given by $R_1\geq H(X_1|X_2)$, $R_2 \geq H(X_2|X_1)$ and $R_1+R_2\geq H(X_1,X_2)$, which is shown in Figure \ref{fig:DSC2}. The proof of the Slepian--Wolf theorem is based on random binning. Binning is a key concept in DSC and refers to partitioning the space of all possible outcomes of a random source into disjoint subsets or bins. However, the Slepian--Wolf approach suffers from several drawbacks. First, just as in Shannon's channel coding theorem, the random binning argument used in the proof of the Slepian--Wolf theorem is asymptotic and non-constructive; In general, Slepian--Wolf codes are not easy to implement in practice. Second, as a result of non-random binning applied in most known practical codes, a complete zero-error regime is not applicable and some error is evident. Third, the Slepian--Wolf theorem is not easily scalable to a greater number of sources and non-binary alphabets.

Let us now consider a case in which we are interested in a highly scalable distributed source coding system, in which the encoders are “cheap”, in the sense that they consist of a single predefined codebook and cannot be reconfigured post deployment. Such a setup applies, for example, where an extremely large number of sources need to be encoded in a distributed manner by multiple encoders, in order to reduce the computational cost required by each encoder. In addition, we would like the system to be lossless in the strict sense, such that the decompressed random variables are equal to the original sources for any block length. 
We notice that if the sources are statistically independent of each other then the rate of each encoder is simply given by the marginal entropy of that source, and the overall rate achieves Shannon's joint entropy lower bound.  Inspired by this, we ask ourselves whether there exists a pre-processing component which makes the sources as “statistically independent as possible”, prior to the distributed compression. As Figure \ref{fig:DSC1}.c demonstrates, we would like to find an invertible transformation $\underline{Y}=g(\underline{x})$ such that $\underline{Y}$ is of the same dimension (and the same alphabet size). In addition we would like the components of $\underline{Y}$ to be as statistically independent as possible in the sense that $\sum{H(Y_i)} \rightarrow \min$, where $H(Y_i)$ corresponds to the minimal rate of the $i^{th}$ encoder and $\sum{H(Y_i)}$ is bounded by $H(\underline{X})$, the joint entropy of all sources. This is exactly achieved by our suggested generalized ICA over finite group algorithms, where both the transmitter and the receiver are assumed to know $P_{\underline{X}}$, and are therefore capable of finding the corresponding mapping (and its inverse).         

\begin{figure}[h]
\centering
\includegraphics[width = 0.5\textwidth,bb= 145 395 450 665,clip]{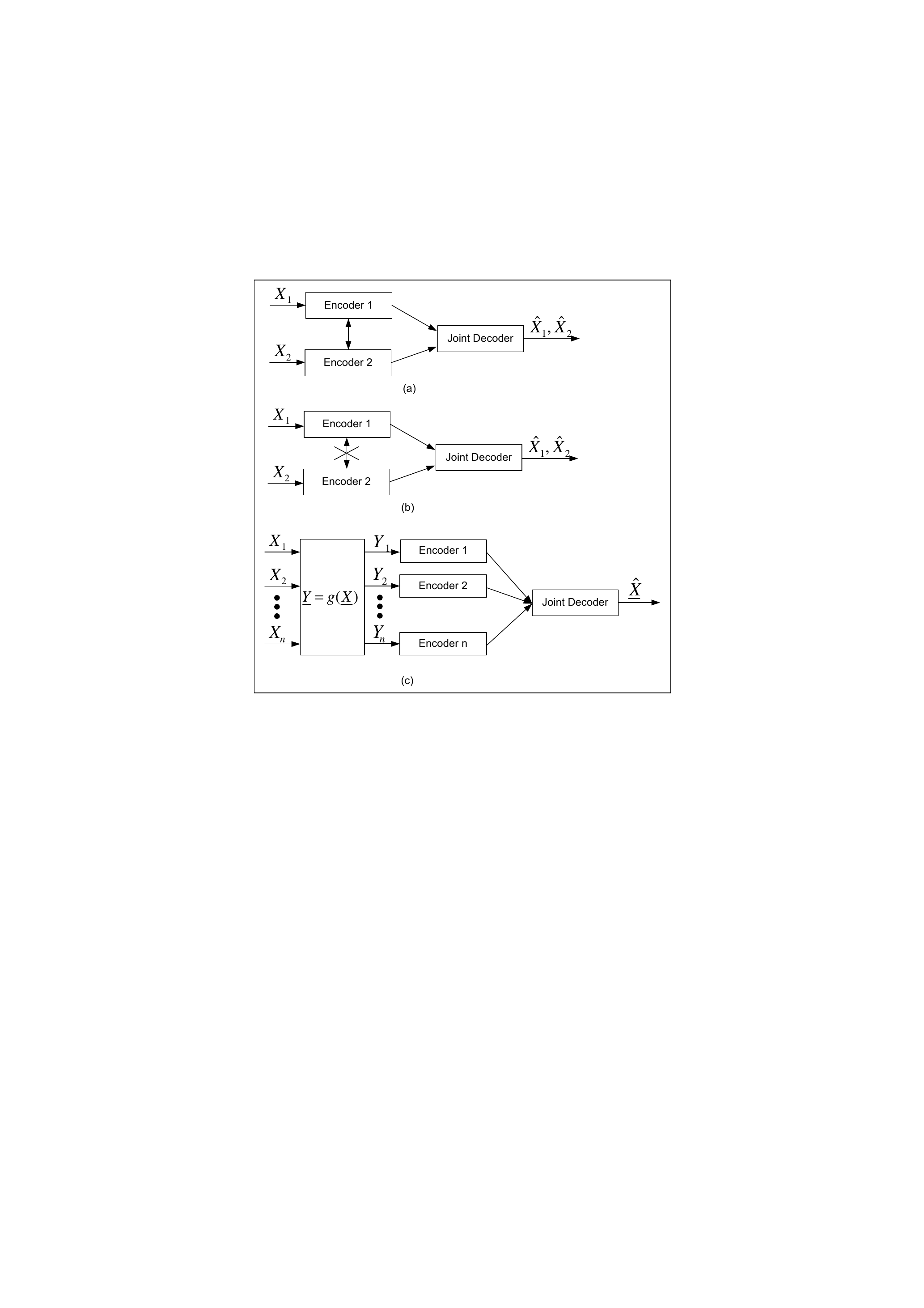}
\caption{(a) Joint encoding of $X_1, X_2$. The encoders collaborate and a rate $H(X_1,X_2)$ is sufficient. (b) Distributed encoding of $X_1, X_2$. The encoders do not collaborate. The Slepian--Wolf theorem says that a rate $H(X_1,X_2)$ is also sufficient provided that the decoding of $X_1$ and $X_2$ is done jointly. (c) Distributed encoding with a pre-processing stage which enables a scalable, zero-error system with cheap fixed codebook encoders working in a rate “close” to $H(X_1,X_2)$.}
\label{fig:DSC1}
\end{figure}

However, the pre-processing stage is a somewhat centralized component. That is, one may claim that by sensing all sources $\underline{X}$, the centralized component can communicate to each encoder the realizations of the other sources. This shall enable conditional coding as in the centralized source coding scheme described above. This kind of regime, however, entails “expensive” and sophisticated encoders which are able to communicate and store multiple codebooks (exponential in the number of sources).    

\begin{figure}[h]
\centering
\includegraphics[width = 0.5\textwidth,bb= 185 495 400 662,clip]{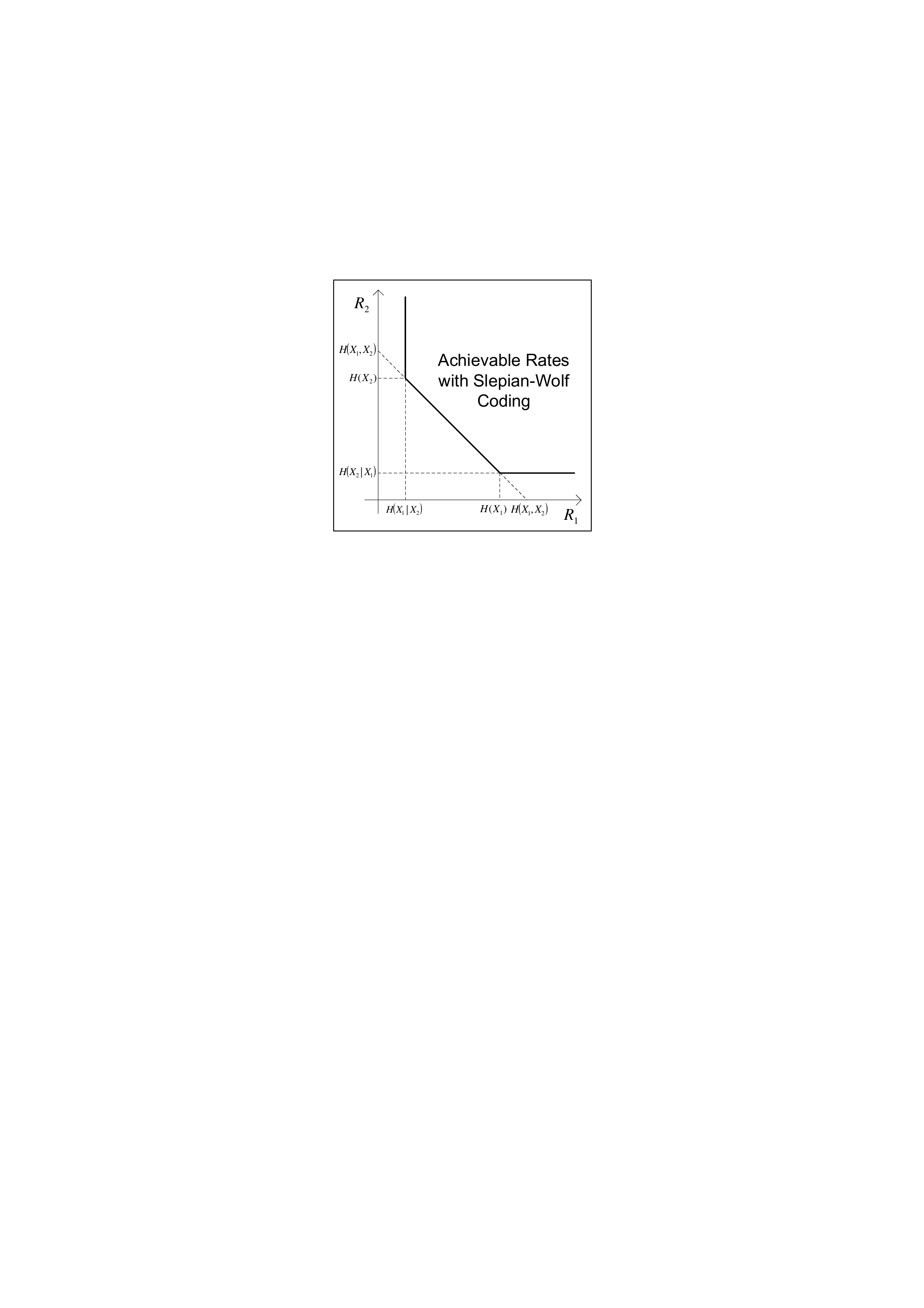}
\caption{The Slepian--Wolf rate region for two sources}
\label{fig:DSC2}
\end{figure}

Still, assume we are interested in eliminating, or at least reducing, centralized components. This means we would like each component $Y_i$ to be dependent on as few components of $\underline{X}$ as possible. Obviously we cannot simply apply our generalized ICA algorithms on $\underline{X}$ as it may result with sophisticated dependencies between $Y_i$ and the components of $\underline{X}$. However, we may restrict these dependencies by applying the constrained version of our method, discussed in Section \ref{Constrained BICA}. For example, we may define a spatial pattern in which each component $Y_i$ depends on no more than $r$ adjacent components of $\underline{X}$, such that $Y_i=f_i(x_i,\dots,x_{i+r\pmod{n}})$. This pattern is analyzed for $r=2$ in Section \ref{r=2} and can be extended to $r>2$ as later discussed in that section. 

This method of restricting the number of components each encoder is responsible for, followed by generalized ICA mapping, can be viewed as a trade-off between how centralized/decoupled the encoder is and the achievable rate which can be achieved in a lossless source coding regime for any block length.

\subsection{Source Coding of Large Alphabet Data}
\label{large alphabet}
Let ${\underline{x}(i)}_{i=1}^{N}$ be a sequence of $N$ i.i.d. vector realizations we are to encode. Assume the dimension of the vector $\underline{x}$ is $n$ and each of its components take values over an alphabet size $q$. Therefore, the alphabet size of the vectors $\underline{x}(i)$ is $q^n$. 

A compressed representation of a dataset involves two components - the compressed data itself and an overhead redundancy (for example, in the form of a dictionary for decompression purpose). Encoding the data requires at least $N$ times the data's empirical entropy, $\hat{H}(\underline{X})$. This is attained through entropy coding according to the data's empirical distribution.
The redundancy, on the other hand, may be quantified in several ways. One common way of measuring the coding redundancy is through the minimax criterion \cite{davisson1973universal}. Szpankowski and Weinberger \cite{szpankowski2012minimax} derived the minimax point-wise redundancy for various ranges of sequence length and alphabet size. Throughout this section we use their results to quantify the encoding redundancy.

Large alphabet source coding considers the case where the alphabet size $q^n$ is of order larger than the number of realizations $N \ll q^n$. Unlike the case where $N \gg q^n$, here the coding redundancy may not be negligible. Precisely, the minimax point-wise redundancy behaves asymptotically as

 \begin{equation}
\label{large_ab}
d_{N,q^n}\simeq N\log{\frac{q^n}{N}}.
\end{equation}

We argue that in some setups it is better to split the components of the data into blocks, with $b$ components in each block, and encode the blocks separately. Notice that we may set the value of $b$ so that the blocks are no longer considered as over a large alphabet size $(N \gg q^b)$. This way, the redundancy of encoding each block separately is again negligible, at the cost of longer averaged codeword length. For simplicity of notation we define the number of blocks as $B$, and assume $B=\frac{n}{b}$ is a natural number. Therefore, according to (\ref{large_ab}) encoding the $n$ components all together takes a total of about  

\begin{equation}
\label{compression_large}
N\cdot \hat{H}(\underline{X})+N\log{\frac{q^n}{N}}
\end{equation}
bits, while the block-wise compression takes about

\begin{equation}
\label{compression_blocks}
N\cdot \sum_{j=1}^{B}{ \hat{H}({\underline{X}}^{(j)})}+B\frac{q^b-1}{2}\log{\frac{N}{q^b}}
\end{equation}
bits, where the first term is $N$ times the sum of $B$ block entropies and the second term is $B$ times the redundancy of each block when $(N\gg q^b)$.

Two subsequent questions arise from this setup:
\begin{enumerate}

\item	What is the optimal value of $b$ that minimizes (\ref{compression_blocks})?
\item	Given a fixed value of $b$, how can we re-arrange $n$ components into $B$ blocks so that the averaged codeword length (lower-bounded by the empirical entropy), together with the redundancy, is as small as possible?

\end{enumerate}

Let us start by fixing $b$ and focusing on the second question. 

A naive approach is to exhaustively or randomly search for all possible combinations of clustering $n$ components into $B$ blocks. Assuming $n$ is quite large, an exhaustive search is practically infeasible; hence a different method is required.
We suggest applying our generalized ICA tool as an upper-bound search method for efficiently searching for a minimal possible averaged codeword length.
  
As in previous sections we define $\underline{Y}=g(\underline{X})$, where $g$ is some invertible transformation of $\underline{X}$.
Every block of the vector $\underline{Y}$ satisfies 
\begin{equation}
\label{upperbound}
\sum_{i=1}^{b}{\hat{H}({Y}_{i}^{(j)})} \geq \hat{H}(\underline{Y}^{(j)})
\end{equation}
where $\underline{Y}^{(j)}$ refers to the $j^{th}$ block of size $b$. This means that the sum of marginal empirical entropies of each block bounds the empirical entropy of the block from above. Summing over all $B$ blocks in (\ref{upperbound}) we have 

\begin{equation}
\label{upperbound2}
 \sum_{j=1}^{B}{\sum_{i=1}^{b}{\hat{H}({Y}_{i}^{(j)})}}=
 \sum_{i=1}^{n}{\hat{H}(Y_i)}  \geq  \sum_{j=1}^{B}{\hat{H}({\underline{Y}}^{(j)})}
\end{equation}
which indicates that the sum of the block empirical entropies is upper bounded by the marginal empirical entropies of the components of $\underline{Y}$ (with equality iff the components are independently distributed).

  

Our suggested scheme works as follows: 
We first randomly partition the $n$ components into $B$ blocks. We estimate the joint probability of each block and apply the generalized ICA on it. The sum of marginal empirical entropies (of each block) is an upper bound on the empirical entropy of each block, as indicated in (\ref{upperbound}). Now, let us randomly permute the $n$ components of the vector $\underline{Y}$. Here by "permute" we refer to an exchange of positions of $\underline{Y}$'s components, as opposed to permuting the words, like we did in previous sections. By doing so, the sum of marginal empirical entropies of the entire vector $\sum_{i=1}^{n}{\hat{H}(Y_i)}$ is maintained. However, as we now apply the generalized ICA on each of the (new) blocks, we necessarily minimize (or at least do not increase) the sum of marginal empirical entropies of the (new) block, hence minimize the sum of marginal empirical entropies of the entire vector $\underline{Y}$. This means that we minimize the upper bound of the sum of block empirical entropies, as (\ref{upperbound2}) suggests. In other words, we show that in each iteration we decrease (at least do not increase) an upper bound on our objective. We terminate once a maximal number of iterations is reached or we can no longer decrease the sum of marginal empirical entropies.

Algorithm 2 summarizes our suggested scheme.

\begin{algorithm}
\caption{Large Alphabet Source Coding via Generalized ICA over Finite Alphabets}
\begin{algorithmic} [1]
\REQUIRE $\underline{X}$=a vector of realizations $N$.
\REQUIRE $n$ = the dimension of $\underline{X}$. 
\REQUIRE $B$= the number of blocks.
\REQUIRE $I$ = the number of iterations.

\STATE $\hat{H}_{m}(1)\leftarrow \sum_{i=1}^{n}{\hat{H}(Y_i)}$ , where $\hat{H}_{m}(i)$ is the sum of marginal empirical entropies of the vector $\underline{X}$ at iteration $i$. 

\STATE $eps\leftarrow 10^{-10}$
\STATE $i\leftarrow 2$, where $i$ be an iterations counter.
\WHILE {$i \leq I$ and $\hat{H}_{m}(i) \leq \hat{H}_{m}(i-1)$}
\STATE Randomly partition $n$ components into $B$ blocks

\FOR {$j=1$ to $B$}

\STATE $\hat{H}^{(j)}(i) \leftarrow$  the empirical entropy of the $j^{th}$ block at iteration $i$.
\STATE Apply Generalized ICA over Finite Alphabets on the $j^{th}$ block 
\STATE $\hat{H}_{m}^{(j)}(i)\leftarrow $ the sum of marginal empirical entropies of the $j^{th}$ block at iteration $i$. 

\ENDFOR
\STATE $\hat{H}_{m}(i) \leftarrow \sum_{j=1}^{B}{\hat{H}_{m}^{(j)}(i)}$.
\STATE $\hat{H}_{b}(i) \leftarrow \sum_{j=1}^{B}{\hat{H}^{(j)}(i)}$, where  $\hat{H}_{b}(i)$ is the sum of block empirical entropies of the vector $\underline{X}$ at iteration $i$.
\STATE $i\leftarrow i+1$
\ENDWHILE
\RETURN $\left\{\hat{H}_{b}(i)\right\}_{i=1}^I$
\end{algorithmic}
\label{alg:algorithm}
\end{algorithm}

Therefore, encoding the data if we terminate at iteration $I_0$ takes a total of about

\begin{align}
\label{compression_total}
   N\cdot \sum_{j=1}^{B}{\hat{H}^{[I_0]}({\underline{Y}}^{(j)})}+B\frac{q^b-1}{2}\log{\frac{N}{q^b}}+ I_0B\cdot &bq^b+I_0n\log{n}
\end{align}
where the first term refers to the sum of block empirical entropies at the $I_0$ iteration, the third term refers to the representation of $I_0 \cdot B$ invertible transformation of each block during the process until $I_0$, and the fourth term refers to the bit permutations at the beginning of each iteration.

Hence, in order to minimize (\ref{compression_total}) we need to find the optimal trade-off between a low value of $\sum_{j=1}^{B}{\hat{H}^{[I_0]}({\underline{Y}}^{(j)})}$ and a low iteration number $I_0$. 
We may then apply this technique with different values of $b$ to find the best compression scheme over all block sizes.
 
In order to demonstrate our suggested method we generate a dataset according to Zipf distribution. The Zipf distribution is a discrete distribution commonly used for modeling of natural (real-world) quantities. It is widely used in physical and social sciences, linguistics, economics and many other fields. 
We draw $N=10^6$ realizations from the Zipf distribution with a parameter $1.4$ and represent each realization through a $24$ bit representation. For this dataset we attain an empirical entropy of $5.55$ bits. Therefore, compressing the drawn realizations in its given $2^{24}$ alphabet size takes a total of $10^{6}\times5.55+10^6\times\log{\frac{2^{24}}{10^6}}=9.62 \cdot 10^6$ bits, according to (\ref{compression_large}). 

Let us now apply a block-wise compression. 
We first demonstrate the behavior of our suggested approach with four blocks $(B=4)$ in Figure \ref{fig:zipf}, compared with a naive search which randomly searches for all possible shufflings of $n$ components into $B$ blocks. (as described above).

\begin{figure}[h]
\centering
\includegraphics[width = 0.45\textwidth,bb= 50 190 550 590,clip]{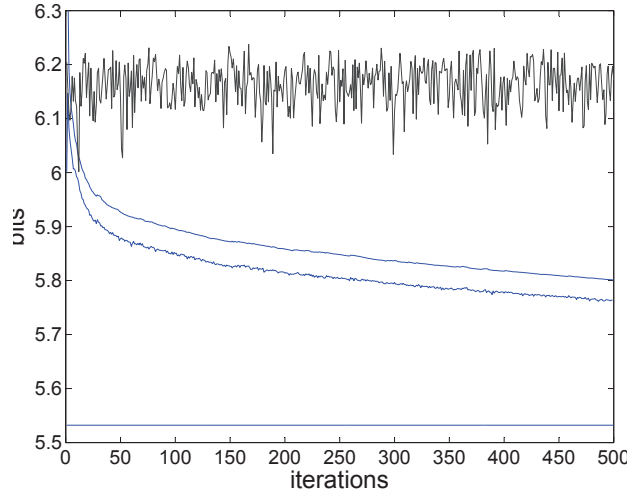}
\caption{Large Alphabet Source Coding via Generalized ICA over Finite Alphabets. The horizontal line indicated the empirical entropy of $\underline{X}$. The upper curve is the sum of marginal empirical entropies which upper bounds the lower curve, the sum of block empirical entropies (the outcome of our suggested approach at each iteration). The fluctuating line is the sum of block empirical entropies as returned by a naive random search}
\label{fig:zipf}
\end{figure}

We first notice that the naive search is able to find a minimum of $\sum_{j=1}^{B}{\hat{H}({\underline{Y}}^{(j)})}=6$ bits. This leads to a total of about $6 \cdot 10^6$ bits for the entire dataset, which is significantly lower than the $9.62 \cdot 10^6$ bits we achieve through a single block compression. It is important to mention that this minimum is found while randomly searching over a much larger number of iterations than indicated in Figure (\ref{fig:zipf}).

Applying our suggested scheme, we minimize (\ref{compression_total}) over $I_0=50$ and $\sum_{j=1}^{B}{\hat{H}({\underline{Y}}^{(j)})}=5.85$  to achieve a total of $5.93 \cdot 10^6$ bits for the entire dataset, saving about  $7 \cdot 10^4$ bits compared to the naive search. Table \ref{table:zipf_results} summarizes the results we achieve for different block sizes $B$, compared with a naive search.  

\begin{table*}[t]
\caption{Comparison Between Block-Wise Compression via a Naive Search and a Generalized ICA Method. "Our approach" refers to a Large Alphabet Source Coding via Generalized ICA over Finite Alphabets while the naive search refers to a random search result, when clustering $n$ components into $B$ blocks (that is, the minimum over $1000$ random trails)}
\renewcommand{\arraystretch}{1.7} 
\label{table:zipf_results} 
\centering
\begin{tabular}{c|c|c|c|c|c|c}  
\hline

\begin{tabular}{@{}c@{}}Number of \\ Blocks\end{tabular}
&\begin{tabular}{@{}c@{}}Number of Components\\ in each Block \end{tabular}
&\begin{tabular}{@{}c@{}} Naive Search Minimum of \\ $\sum_{j=1}^{B}{\hat{H}({\underline{Y}}^{(j)})}$\end{tabular}
&\begin{tabular}{@{}c@{}} Naive Search \\ Total Compression\end{tabular}
&\begin{tabular}{@{}c@{}} Our Approach \\ Optimal $I_0$\end{tabular}
&\begin{tabular}{@{}c@{}} Our Approach  \\ $\sum_{j=1}^{B}{\hat{H}({\underline{Y}}^{(j)})}$\end{tabular}
&\begin{tabular}{@{}c@{}} Our Approach \\ Total Compression\end{tabular}
\\[1ex]  \hline
\hline
$6$   & $4$ & $6.1$& $6.1\cdot 10^6$ & $70$ & $6$ & $\bold{6.03\cdot10^6}$  \\[1ex]  \hline
$4$  & $6$ & $6$& $6\cdot 10^6$ & $50$ & $5.85$ & $\bold{5.93\cdot10^6}$  \\[1ex]  \hline
$3$  & $8$ & $5.9$& $5.9\cdot 10^6$ & $10$ &$5.8$ & $\bold{5.86\cdot10^6}$\\[1ex]  \hline

\end{tabular} 
\end{table*}

\section{Conclusion}
In this work we consider a generalized ICA over finite alphabets framework where we drop the common assumptions on the underlying model. Specifically, we attempt to decompose a given "categorical" vector to its "as statistically independent as possible" components with no further assumptions, as introduced by Barlow \cite{barlow1989finding}.  

We first focus on the binary case and propose three algorithms to address this class of problems. In the first algorithm we assume that there exists a set of independent components that were mixed to generate the observed vector. We show that these independent components are recovered in a combinatorial manner in $O(n \cdot 2^n)$ operations. The second algorithm drops this assumption and accurately solves the generalized BICA problem through a branch and bound search tree structure. Then, we propose a third algorithm which bounds our objective from above as tightly as we want to find an approximated solution in $O(n^k \cdot 2^n)$ with $k$ being the approximation accuracy parameter. We further show that this algorithm can be formulated as a single matrix-vector multiplications and under some generative model assumption the complexity is dropped to be polynomial in $n$. 

Following that we extend our methods to deal with a larger alphabet size. This case necessitates a heuristic approach to deal with the super exponentially increasing complexity. An objective descent search method is presented for that purpose. 

In addition we show that in some applications we may get a reduced complexity using a constrained version of the generalized ICA framework. This version incorporates additional restrictions on the number of components $r$ in $\underline{X}$ that each element of $\underline{Y}$ is dependent on. This guarantees that the outcome of the generalized ICA algorithm is not an exponentially increasing mapping (in $n$) but a simpler and more compact structure. We show this problem can be solved through known heuristics when $r=2$ but requires a branch and feasible approach for larger values of $r$.     

The code for our suggested methods in publically available \footnote{\url{https://sites.google.com/site/amichaipainsky}} for future research. 

We conclude the paper by presenting four applications, three of which are in the field of source coding. The first source coding application extends the predictive coding scheme. We show that our generalized ICA method may apply to this setup to enhance its performance. This means that instead of sequentially generating statistically independent components (prediction errors) we generate a (more) statistically independent vector at the cost of introducing a time delay. Second, we show that our method applies for distributed source coding. We claim that by making the source "as statistically independent as possible" prior to encoding each component separately, we achieve lossless compression for any block size while minimizing the minimal rate we may work at. We conclude by discussing the problem of encoding data over a large alphabet size. Our suggested scheme, based on the generalized ICA method, shows to efficiently find a transformation that achieves a low entropy for a block based entropy encoder. 

Although we focus our interest on source coding, the generalized ICA over finite alphabets method is a universal tool which aims to decompose data into its fundamental components. We believe that the theoretical properties we introduce, together with the practical algorithms which utilize a variety of setups, make this tool applicable to many disciplines.

\appendix

\begin{theorem}
Assume a binary random vector $\underline{X} \in \{0,1\}^n$ is generated from a first order stationary symmetric  Markov model. Then, the joint probability of $\underline{X}$,  $P_{\b{x}}=p_1, \dots, p_n$ only contains $n\cdot(n-1)+2$ unique (non-identical) elements of $p_1, \dots, p_n$.
\end{theorem}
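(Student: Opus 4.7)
The plan is to parameterize each joint probability by its monomial structure in the Markov parameters and then count distinct parameterizations. By the Markov property and time--homogeneity we may write
\begin{equation*}
P(\underline{x}) \;=\; \pi_{x_1}\prod_{i=1}^{n-1} p_{x_ix_{i+1}} \;=\; \pi_{x_1}\, p_{00}^{\,n_{00}(\underline{x})}\, p_{01}^{\,n_{01}(\underline{x})}\, p_{10}^{\,n_{10}(\underline{x})}\, p_{11}^{\,n_{11}(\underline{x})},
\end{equation*}
where $n_{ab}(\underline{x})$ is the number of indices $i$ with $(x_i,x_{i+1})=(a,b)$. For generic values of the Markov parameters, two sequences produce the same joint probability if and only if they share the same tuple $\tau(\underline{x})=(x_1,n_{00},n_{01},n_{10},n_{11})$. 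Thus the number of distinct probabilities equals the number of admissible tuples, which is what I need to count.

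Next, I would isolate the admissibility constraints. Clearly $n_{00}+n_{01}+n_{10}+n_{11}=n-1$. A more useful observation is that along the chain from $x_1$ to $x_n$ the switches $0\!\to\!1$ and $1\!\to\!0$ must almost balance; formally $n_{01}-n_{10}=\mathbf{1}\{x_n=1\}-\mathbf{1}\{x_1=1\}\in\{-1,0,1\}$. So the $(x_1,x_n)$ pair fully determines the sign and magnitude of $n_{01}-n_{10}$, and the remaining freedom in the tuple can be indexed by $j:=\min(n_{01},n_{10})$ together with one of $n_{00}$ or $n_{11}$ (the other is pinned by the sum constraint).

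With this bookkeeping the count splits into four cases indexed by $(x_1,x_n)$. Using the run--block picture of a binary string (a length-$L_0$ of zeros partitioned into $1$-block--separated 0-blocks), one shows that the admissible $n_{00}$ values for a fixed $(x_1,x_n,j)$ form a consecutive range whose size I claim to be $n-2j$ in the equal--endpoints cases and $n-2j-1$ in the unequal--endpoints cases, with the degenerate sub-case $j=0,\,x_1=x_n$ contributing only the constant sequence. This yields
\begin{equation*}
N(n) \;=\; 2\!\left(1+\sum_{j=1}^{\lfloor(n-1)/2\rfloor}(n-2j)\right) + 2\sum_{j=0}^{\lfloor(n-2)/2\rfloor}(n-2j-1).
\end{equation*}

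The final step is to evaluate this sum. Splitting on the parity of $n$ and using the standard arithmetic series identity, the four arithmetic sums combine to $n(n-1)+2$ in both parities, matching the claim. I expect the main obstacle to be Step 3: the degenerate cases $j=0$ with $x_1=x_n$ (the constant sequences) violate the clean ``$n-2j$'' count, so one has to split them off and add them back as isolated contributions; without this careful boundary handling the totals overshoot. A small auxiliary remark is also needed: one should note that algebraically independent generic parameters make distinct tuples give distinct probabilities, so that the tuple count is the exact (not merely upper--bound) number of unique elements of $\underline{p}$.
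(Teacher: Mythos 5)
Your proof is correct and follows essentially the same route as the paper's: both identify a sufficient statistic for the word probability (your tuple $(x_1,n_{00},n_{01},n_{10},n_{11})$ is equivalent to the paper's triple of first bit, number of ones and number of transitions) and then count its feasible values, your case split over $(x_1,x_n,j)$ being merely a reorganization of the paper's per-first-bit sum $\sum_{r}(n-r)$, with both evaluating to $n(n-1)+2$. The only quibble is your closing exactness remark: under the stated \emph{symmetric} model the parameters are not generic (e.g.\ $p_{01}=p_{10}$), so distinct tuples can yield equal probabilities and the count is really an upper bound on the number of unique elements --- which is all the paper's statement needs and all its own proof delivers as well.
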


\begin{proof}
We first notice that for a binary, symmetric and stationary Markov model, the probability of each word is solely determined by 

\begin{enumerate}

\item	The value of the first (most significant) bit 
\item	The number of elements equal $1$ (or equivalently $0$) 
\item	The number of transitions from $1$ to $0$ (and vice versa).

\end{enumerate}
 
For example, for $n=4$ the probability of $0100$ equals the probability of $0010$, while it is not equal to the probability of $0001$.

Assume the number of transitions, denoted in $r$, is even. Further, assume that the first (most significant) bit equals zero. Then, the number of words 
with a unique probability is 
\begin{equation}
\label{U1}
U_1=\sum_{r=2,\; r\; \text{is even}}^{n-2} \sum_{k=\frac{r}{2}}^{n-\frac{r}{2}}1=\sum_{r=2,\; r\; \text{is even}}^{n-2}n-r
\end{equation}

where the summation over $r$ corresponds to the number of transitions, while the summation over $k$ is with respect to the number of $1$ elements given $r$. For example, for $n=4$ and $r=2$ we have the words $0100, 0010$ for $k=1$ (which have the same probability as discussed above), and the word $0110$ for $k=2$. 
In the same manner, assuming again that the most significant bit is $0$ but now $r$ is odd, we have 
\begin{equation}
\label{U2}
U_2=\sum_{r=2,\; r\; \text{is odd}}^{n-1} \sum_{k=\frac{r+1}{2}}^{n-\frac{r+1}{2}}1=\sum_{r=2,\; r\; \text{is odd}}^{n-1}n-r.
\end{equation}

Putting together (\ref{U1}) and (\ref{U2}) we have that number of words with a unique probability, assuming the most significant bit is $0$, equals 

\begin{equation}
U1+U_2=\sum_{r=1}^{n-1} n-r = \frac{n\cdot(n-1)}{2}+1.
\end{equation}

The same derivation holds for the case where the most significant bit is $1$, leading to a total of $n\cdot(n-1)+2$ words with a unique probability $\blacksquare$

\end{proof}

\begin{theorem}
There are $2^{n+1}-2$ invertible $n \times n$ matrices over the binary field which satisfy
\begin{equation}\nonumber
\left\{A_{i,j}=0 \;\; \forall i,j\;\;  s.t.\;\; j\neq i \; ,\; j\neq{i+1}\pmod{n}\right\}.
\end{equation}

\end{theorem}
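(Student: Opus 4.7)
The plan is to parametrize the allowed matrices explicitly, compute the determinant over $GF(2)$ in closed form, and count. Let $a_i = A_{i,i}$ for $i = 1,\dots,n$ (diagonal entries) and $b_i = A_{i,\,i+1 \bmod n}$ (cyclic superdiagonal entries); all other entries are forced to $0$. So the question reduces to counting the binary vectors $(a_1,\dots,a_n,b_1,\dots,b_n) \in \{0,1\}^{2n}$ for which the resulting matrix $A$ is invertible over $GF(2)$.

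First I would compute $\det(A)$ by cofactor expansion along the first column. The first column is zero except for $A_{1,1} = a_1$ and $A_{n,1} = b_n$ (since $n+1 \equiv 1 \pmod n$). Striking row $1$ and column $1$ leaves an upper-triangular block whose diagonal is $(a_2,a_3,\dots,a_n)$, so the corresponding minor equals $\prod_{i=2}^n a_i$. Striking row $n$ and column $1$ leaves a lower-triangular block whose diagonal is $(b_1,b_2,\dots,b_{n-1})$, so the corresponding minor equals $\prod_{i=1}^{n-1} b_i$. Over $GF(2)$ the signs vanish, giving the clean formula
\begin{equation}\nonumber
\det(A) \;=\; \prod_{i=1}^{n} a_i \;+\; \prod_{i=1}^{n} b_i \pmod{2}.
\end{equation}

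Invertibility is equivalent to $\det(A) = 1$, which in $GF(2)$ means exactly one of the two products equals $1$. This splits into two disjoint cases: (i) $\prod a_i = 1$ and $\prod b_i = 0$, forcing $a_i = 1$ for all $i$ and leaving $2^n - 1$ admissible choices of $(b_1,\dots,b_n)$ (any $b$-vector other than the all-ones one); and (ii) $\prod a_i = 0$ and $\prod b_i = 1$, which symmetrically yields $2^n - 1$ choices. Summing, the count is $2(2^n - 1) = 2^{n+1} - 2$, as claimed.

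There is no real obstacle here: the only care needed is to correctly read off the two minors in the cofactor expansion and to recognize that, over $GF(2)$, the sign $(-1)^{n+1}$ drops out, yielding the additive rather than multiplicative dependence on the two products. Every other step is bookkeeping. A sanity check at $n=2$ gives $\det = a_1a_2 + b_1b_2$ and $3 + 3 = 6 = 2^{3}-2$, matching the formula.
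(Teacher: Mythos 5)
Your proposal is correct and follows essentially the same route as the paper: a cofactor (Laplace) expansion along the first column producing the two triangular minors $\prod_{i=2}^{n}a_i$ and $\prod_{i=1}^{n-1}b_i$, followed by a disjoint-case count. The only difference is cosmetic --- you fold $A_{n,1}=b_n$ into a single closed-form determinant $\det(A)=\prod_i a_i+\prod_i b_i$ over $GF(2)$, which subsumes the paper's separate treatment of the cases $A_{n,1}=0$ (triangular) and $A_{n,1}=1$, and makes the count $2(2^n-1)=2^{n+1}-2$ immediate.
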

\begin{proof}
Let us first denote the diagonal of a matrix $A$,  $A_{i,i}$  as $d_1$ while $d_2$ refers to the secondary diagonal above it, $A_{i,i+1}$.
We denote the case where a diagonal $d_k$ is all ones as $d_k=1$.
 The only entries of $A$ which can take non-zero values are the entries on $d_1, d_2$ and the bottom left entry, $A_{n,1}$. 
A matrix is invertible iff its determinant is non-zero. Laplace's formula expresses the determinant of a matrix in terms of its minors. The minor $M_{i,j}$ is defined to be the determinant of the $(n-1) \times (n-1)$ matrix that results from $A$ by removing the $i^{th}$ row and the $j^{th}$ column. The determinant of $A$ is then given by

\begin{equation}
det(A)=\sum_{i=1}^{n}{(-1)^{i+j}A_{i,j}M_{i,j}}
\end{equation}
for any column $1\leq j\leq n$. As we limit ourselves to the binary field the sum operator is simply a XOR between two binary variables.
Let us derive the conditions for $\det(A) \neq 0$.
First, assume $A_{n,1}=0$. In this case $A$ is a triangular matrix and  $\det(A)=\prod_{i=1}^n A_{i,i}$. Therefore, the determinant of $A$ is not equal to zero iff the diagonal $d_1$ is all ones. Hence, we have $2^{n-1}$ values for $d_2$ such that $\det(A) \neq 0$ in this case.

In the case where $A_{n,1}=1$ we have $\det(A)=A_{1,1}M_{1,1}+M_{n,1}$. This means $\det(A) \neq 0$ iff either $A_{1,1}M_{1,1}=1$ or $M_{n,1}=1$, but not both of them together. 
We already found that $A_{1,1}M_{1,1}=1$ iff $d_1$ is all ones and in the same way $M_{n,1}=1$ iff $d_2$ is all ones. Therefore $\det(A) \neq 0$ iff either $d_1=1$ or $d_2=1$.
This means we have $\left(2^{n-1}-1 \right)+\left(2^{n}-1 \right)$ different matrices which are invertible in this case.

Putting this all together we have $2^{n-1}+\left(2^{n}-1 \right)+\left(2^{n-1}-1 \right)=2^{n+1}-2$ different matrices which satisfy the conditions of $A$ and are invertible $\blacksquare$
\end{proof}


%



\section*{Acknowledgment}
This research was supported in part by a returning scientists grant
to Amichai Painsky from the Israeli Ministry of Science, and by Israeli Science
Foundation grants 1487/12 and 634/09. The authors thank Greg Ver Steeg for helpful comments regarding previous work in this field. The thorough comments of an anonymous reviewer are particularly appreciated.

\bibliographystyle{IEEEtran}
\bibliography{bibi}


\begin{IEEEbiographynophoto}{Amichai Painsky}
received his B.Sc. degree in Electrical Engineering from Tel Aviv University (2007) and his M.Eng. degree in Electrical Engineering from Princeton University (2009). He is currently carrying a Ph.D. at the Statistics department of Tel Aviv University School of Mathematical Sciences. His research interests include Data Mining, Machine Learning, Statistical Learning and their connection to Information Theory
\end{IEEEbiographynophoto}

\begin{IEEEbiographynophoto}{Saharon Rosset}
is an Associate Professor in the department of Statistics and Operations Research at Tel Aviv University. His research interests are in Computational Biology and Statistical Genetics, Data Mining and Statistical Learning. Prior to his tenure at Tel Aviv, he received his PhD from Stanford University in 2003 and spent four years as a Research Staff Member at IBM Research in New York. He is a five-time winner of major data mining competitions, including KDD Cup (four times) and INFORMS Data Mining Challenge, and two time winner of the best paper award at KDD (ACM SIGKDD International Conference on Knowledge Discovery and Data Mining)
\end{IEEEbiographynophoto}

\begin{IEEEbiographynophoto}{Meir Feder}
(S'81-M'87-SM'93-F'99) received the B.Sc and M.Sc degrees
from Tel-Aviv University, Israel and the Sc.D degree from the Massachusetts
Institute of Technology (MIT) Cambridge, and the Woods Hole
Oceanographic Institution, Woods Hole, MA, all in electrical
engineering in 1980, 1984 and 1987, respectively.

After being a research associate and lecturer in MIT he joined
the Department of Electrical Engineering - Systems, School of Electrical Engineering, Tel-Aviv
University, where he is now a Professor and the incumbent of the Information Theory Chair.
He had visiting appointments at the Woods Hole Oceanographic Institution, Scripps
Institute, Bell laboratories and has been a visiting
professor at MIT. He is also extensively involved in the high-tech
industry as an entrepreneur and angel investor.
He co-founded several companies including Peach Networks,
a developer of a server-based interactive TV solution which
was acquired by Microsoft, and Amimon a
provider of ASIC's for wireless high-definition A/V connectivity.

Prof. Feder is a co-recipient of the 1993 IEEE Information Theory
Best Paper Award. He also received the 1978 "creative thinking"
award of the Israeli Defense Forces, the 1994 Tel-Aviv University
prize for Excellent Young Scientists, the 1995 Research Prize of
the Israeli Electronic Industry, and the research prize in applied
electronics of the Ex-Serviceman Association, London, awarded by
Ben-Gurion University.
\end{IEEEbiographynophoto}
\vfill



\end{document}